\def\myjournal{pre}
\def\mystyle{reprint}

\documentclass[aps,\myjournal,\mystyle,superscriptaddress]{revtex4-2}

\usepackage{graphicx}
\usepackage{amsmath}
\usepackage{amssymb}
\usepackage{hyperref}
\usepackage{tikz}
\usetikzlibrary{intersections}
\usetikzlibrary {datavisualization.formats.functions}
\usepackage{amsthm}
\newtheorem*{theorem}{Theorem}
\newtheorem*{remark}{Remark}
\newtheorem{proposition}{Proposition}
\newtheorem{corollary}{Corollary}

\begin{document}

\title{Investigating Disordered Granular Matter via Ordered Geometric Fragmentation}

\author{Malkhazi A. Meladze}
\affiliation{Independent Researcher, Auckland, New Zealand}
\email{Contact author: malkhaz.meladze@gmail.com}

%\date{\today}

\begin{abstract}
The evolution of occupied volume under progressive fragmentation of granular matter is studied
using a purely geometric model. Rather than modelling disorder directly, properties are investigated
by analysing highly ordered reference configurations that provide sharp upper bounds on accessible
volume. Grains are idealised as fragments from a hypothetical elongated parent prism with square
cross section, sequentially sliced and reassembled into configurations that maximise enclosed volume.
Analytic expressions are derived for the maximal volume at each fragmentation stage. Volume
evolution is non-monotonic: initial fragmentation produces structures exceeding the original volume,
while further fragmentation leads to monotonic decrease converging to 5/4 times the initial volume,
independent of fragment number. The packing fraction obeys the asymptotic scaling law of inverse
proportionality to aspect ratio, in agreement with experimental observations. The model reveals pairs
of configurations built from geometrically indistinguishable building blocks yet enclosing different
volumes. These conjugate configurations constitute geometric analogues of distinct phases connected
by rearrangement-induced transitions. A criterion for observability is derived, showing such transitions
are restricted to systems of limited grain number but may occur locally as domain formation in larger
assemblies. Comparison with experimental data confirms the model provides a lower bound on packing
fraction and predicts domain sizes should scale linearly with aspect ratio, testable through X-ray tomography.
\end{abstract}

\maketitle

\section{Introduction}
\label{sec:intro}

Packing problems arise in a wide range of physical contexts, including granular compaction, powder processing,
jamming phenomena, and colloidal self-assembly. A substantial body of work has therefore been devoted to
understanding how particle shape, anisotropy, and geometric constraints influence packing density and
available volume; see, for example, Refs.~\cite{mehta2007,aranson2009,torquato2018} and references therein.
In most studies, the primary objective is to determine optimal or typical packing fractions for prescribed
particle shapes within disordered or statistically defined ensembles.

Recent theoretical progress has been made in understanding random close packing (RCP) of monodisperse
spherical particles through analytical frameworks linking packing fraction to coordination number and the onset
of mechanical rigidity~\cite{zaccone2025}. These approaches identify the jamming transition as a critical point
characterized by a threshold coordination number, with packing fractions that are largely independent
of system size. However, the case of non-spherical elongated particles remains analytically challenging and has
been explicitly identified as an open problem requiring further investigation~\cite{zaccone2025}. The present
work contributes to this area through a purely geometric approach that reveals how fragmentation-induced volume
evolution in elongated grain assemblies can be understood from geometric constraints alone, without recourse to
mechanical or energetic considerations.

In contrast to conventional packing studies, the present work addresses a complementary question: how does the
volume occupied by granular matter evolve under progressive fragmentation? Rather than modelling disorder
directly, this question is examined through a controlled geometric construction based on a single elongated
parent object that is sequentially partitioned into identical building blocks. At each fragmentation stage,
the maximal enclosed volume attainable after reassembly is evaluated. The resulting configurations are not
intended to represent typical granular packings; instead, they provide sharp geometric bounds that elucidate
systematic trends in volume evolution induced by fragmentation.

The motivation for this question is partly phenomenological. Fragmentation alters particle shapes and increases
orientational degrees of freedom, which can dramatically affect packing efficiency and accessible free volume.
A familiar qualitative example is that a fixed mass of finely milled powder often occupies a larger volume than
the same mass of intact grains. For very fine powders (particle sizes below approximately $100\,\mu$m),
this effect is dominated by attractive cohesive forces such as Van der Waals interactions that hinder compaction.
The geometric effects considered in the present model apply primarily to larger grains where such cohesive forces
are negligible and geometric constraints dominate packing behaviour. Recent comprehensive reviews
\cite{zakine2025,jiao2011,torquato2018,borzsonyi2013} have emphasized the central role of particle anisotropy in
controlling packing behaviour in granular systems. The present work was further motivated by the
geometric perspective introduced in Ref.~\cite{zakine2025}.

The packing of elongated particles has been studied extensively both experimentally and
theoretically~\cite{borzsonyi2013,philipse1996,freeman2019,villarruel2000}.
Systematic experiments have demonstrated that packing fraction decreases with increasing aspect ratio for
randomly packed rods~\cite{freeman2019,borzsonyi2016}, with values ranging from $\phi \approx 0.57$ for
short rods (aspect ratio $\alpha \approx 4$) to $\phi \approx 0.13$ for highly elongated particles
($\alpha \approx 32$). Importantly, experimental measurements reveal that nominally identical particles under
similar preparation conditions can yield a \emph{range} of packing fractions rather than a single reproducible
value~\cite{villarruel2000}, indicating the presence of multiple metastable configurations\,---\,a phenomenon
that finds a natural geometric interpretation in the present model.

While much attention has been devoted to the packing of particles with fixed shapes, the influence of
fragmentation on packing density remains less systematically explored. Fragmentation processes\,---\,whether
mechanical comminution, crushing, or controlled slicing\,---\,alter both the size distribution and the
geometric properties of constituent grains~\cite{cho2006}. The resulting changes in particle aspect ratio
and surface morphology can significantly modify packing efficiency. The present work isolates the purely
geometric contribution to these effects by considering idealized fragmentation of a single parent object into
progressively smaller, identical building blocks.

Building on these ideas, a minimal, fully analytic model is constructed that isolates the geometric consequences
of fragmentation and reassembly. Explicit expressions for the maximal enclosed volume at each fragmentation
stage are obtained, and several structural features are revealed. In particular, the emergence of paired
(conjugate) configurations composed of geometrically indistinguishable building blocks, yet enclosing
different volumes, is identified. The model predicts bounds on packing density that are shown to be consistent
with experimental scaling laws for elongated particles. Furthermore, the geometric framework reveals local
ordering phenomena in bulk granular assemblies that can be tested through X-ray tomography experiments.

The objectives of this paper are as follows: (i) to quantify the evolution of volume under sequential
fragmentation of granular matter using a simple geometric model; (ii) to investigate structural relations
among configurations reassembled from differently fragmented building blocks; (iii) to derive explicit
formulae that connect the parameters of the model to experimentally measurable quantities; and (iv) to
compare theoretical predictions with experimental data and propose testable signatures of the geometric
mechanisms identified here.

\section{Model and notation}
\label{sec:model}
In this work, the geometric properties of granular matter composed of idealised grains are investigated
within a purely static geometric framework. Friction between grains is assumed to be negligible, and the
weight of individual grains is ignored. All grains are assumed to originate from a single, very long
hypothetical parent object. The number of grains produced and their individual sizes are determined entirely
by the geometry of this parent object.

Grains are modelled as rectangular parallelepipeds with square cross-section. Accordingly, a family of long
rectangular prisms (hereafter referred to as parent objects) is considered, denoted by $T_{\,n}^{\,0}$,
with square cross-section $a\times a$ and length
\begin{equation}
    l = 2^{\,n+2} a, \qquad n = 0,1,2,\dots .
    \label{eq:Init_length_model}
\end{equation}
The length is written in this particular form for convenience in subsequent calculations: the initial
fragmentation divides the length by four, reducing the exponent by two. While representation of $l$ as
a power of two restricts the length of the parent object to even multiples of $a$, this restriction does not 
affect applicability for large grain numbers, where variations of a single grain are negligible.

Each parent object may conveniently be viewed as consisting of
\begin{equation}
N_{\rm tot} = 2^{\,n+2}
\label{eq:Ntot}
\end{equation}
unit cubes of side length $a$. The corresponding volume is
\begin{equation}
V_{\,n}^{\,0} = 2^{\,n+2} a^{3}.
\label{eq:V0}
\end{equation}

Fragmentation of the parent object is carried out stage by stage. At each stage $i=1,2,\dots,n+1$,
the object is fragmented according to a rule defined below, and the resulting pieces (grains) are
reassembled into a configuration that maximises the volume. The reassembled object is referred to as
a \emph{tower} and is denoted by $T_{\,n}^{\,i}$\,, with volume $V_{\,n}^{\,i}$ and relative volume
\begin{equation}
R_{\,n}^{\,i} = \frac{V_{\,n}^{\,i}}{V_{\,n}^{\,0}}.
\label{eq:RelVol}
\end{equation}
In this notation, the lower index $n$ characterises the size of the parent
object via Eq.~\eqref{eq:Init_length_model}, while the upper index $i$ labels the
fragmentation stage.

\subsection{Fragmentation rule}
\label{sec:fragmntrule}

The fragmentation rule is defined recursively as follows. Stage~0 denotes the initial configuration
prior to any fragmentation, while stages $i = 1, 2, \dots, n+1$ correspond to successive fragmentation steps.

\begin{enumerate}
    \item \textbf{Stage 0:} The parent object $T_{\,n}^{\,0}$ is defined
          as a rectangular prism of dimensions $2^{\,n+2}a \times a \times a$.
    \item \textbf{Stage 1:} The prism is divided into four equal parts of
          length $l/4 = 2^{\,n}a$. These parts are reassembled to maximise
          the enclosed volume, yielding the tower $T_{\,n}^{\,1}$.
    \item \textbf{Stage 2:} Each piece obtained at Stage~1 is divided along
          its longest axis into two equal parts. The resulting eight pieces
          are reassembled into the tower $T_{\,n}^{\,2}$.
    \item This procedure is continued by dividing each piece along its longest
          axis into two equal parts at each subsequent stage, until stage $n+1$,
          at which point all pieces are unit cubes and the maximal-volume
          configuration $T_{\,n}^{\,n+1}$ is obtained.
\end{enumerate}

After fragmentation, the elementary building blocks can, in general, be assembled into towers whose cross
sections form parallelograms. In the present model, the cross section is restricted to be square. This choice
is motivated by the fact that, among all parallelograms with a given perimeter, the square encloses
the maximal cross-sectional area.

The square geometry therefore represents an extremal, volume-maximising configuration within this class
and serves as a natural reference case. No claim is made that this construction exhausts all possible
configurations, nor that a global maximal-volume theorem is established.

\section{Fragmentation \textendash{} reassembly process and volume evolution}
\label{sec:volevolution}

Using the fragmentation rule introduced in Sec.~\ref{sec:fragmntrule}, the slicing of long objects is
presented explicitly below for the first few values of $n$. For each $n$, the sequence of
towers $T_{\,n}^{\,i}$ produced at fragmentation stages $i = 1, \dots, n+1$, including the parent
object $i=0$, is listed alongside their volumes $V_{\,n}^{\,i}$ and corresponding relative volumes
$R_{\,n}^{\,i}$. These examples motivate the general formulae, which are subsequently 
derived in the following subsections.

In the figures that follow, thick lines indicate grain boundaries, while thin lines serve only to
illustrate the relative sizes of the grains.

\subsection{\texorpdfstring{Case $n=0$ ($l = 4a$)}{Case n=0 (l=4a)}}

The present case is the simplest and is illustrated in Fig.~\ref{fig:n0}. The parent object contains
$2^{0+2}=4$ unit cubes and has volume
\[
V_{\,0}^{\,0} = 4a^{3}.
\]
Only one fragmentation stage is possible. The prism $T_{\,0}^{\,0}$ is divided into four equal parts
and reassembled into a cross-shaped tower $T_{\,0}^{\,1}$ with a cavity at its centre, as shown
in Fig.~\ref{fig:n0}. The resulting volume is
\[
V_{\,0}^{\,1} = 5a^{3},
\qquad
R_{\,0}^{\,1} = \frac{5}{4}.
\]
Thus, the maximal-volume configuration obtained by reassembling the fragments occupies a volume larger
than that of the original object by a factor $5/4$.

The four cubes can also be rearranged into a compact configuration with no gaps, reproducing the original
volume. The significance of this example is that random or unconstrained reassembly of the fragments can
lead to a volume larger than that of the original object, while remaining bounded above by the maximal
value $5/4$.

\begin{figure}[t]
\centering
\begin{tikzpicture}[scale=0.4]
\draw[thick] (0,0) node[below]{$a$} node[above=10pt, left]{$a$}--(0,1)--(0.5,0.4)--(0.5,-0.6)--cycle;
\foreach \x in {1,2,3} \draw (\x, 1)--(\x+0.5, 0.4)--(\x+0.5,-0.6);
\draw[thick] (4,1)--(4.5,0.4)--(4.5,-0.6);
\draw[thick] (0,1)--(2,1) node[above] {$l=4a$}--(4,1);
\draw[thick] (0.5,0.4)--(4.5,0.4);
\draw[thick] (0.5,-0.6)--(2.5,-0.6) node[below=7pt]{$T_{\,0}^{\,0}$}--(4.5,-0.6);
\draw[thick] (12.5,0.4)--(13.5,0.4)--(14,-0.2)--(13,-0.2)--cycle;
\draw[thick] (14,-0.2)--(14,-1.2)--(13,-1.2)--(12.5,-0.6)--(12.5,0.4);
\draw[thick] (13,-1.2) node[below]{$T_{\,0}^{\,1}$}--(13,-0.2);

\draw[thick] (11,1)--(12,1)--(12.5,0.4)--(11.5,0.4)--cycle; % Top
\draw[thick] (11,1)--(11,0)--(11.5,-0.6)--(12.5,-0.6)--(12.5,0.4); % Left Front
\draw[thick] (11.5,-0.6)--(11.5,0.4);

\draw[thick] (12,1)--(11.5,1.6)--(12.5,1.6)--(13,1)--cycle;  \draw[thick] (11.5,1.6)--(11.5,1);

\draw[thick] (13,1)--(14,1)--(14.5, 0.4)--(13.5,0.4)--cycle;
\draw[thick] (14.5,0.4)--(14.5,-0.6)--(14,-0.6);
\draw[thick] (13,1)--(13,0.4);
\end{tikzpicture}
\caption{Fragmentation for case $n=0 \, (l=4a)$. The parent object $T_{\,0}^{\,0}$ and the resulting tower $T_{\,0}^{\,1}$
are shown in a cross-shaped arrangement.}
\label{fig:n0}
\end{figure}

In the following, the findings are summarised and rewritten in a form that is convenient for deriving general
formulae for the volumes, relative volumes, and the total number of fragmentation stages. $T(n)$ denotes the total
number of fragmentation stages for a parent object indexed by $n$, so that

\begin{equation*}
\begin{aligned}
    V_{\,0}^{\,0} &= 4a^3 = 4\times2^0 a^3, \\
    V_{\,0}^{\,1} &= 4a^3 + a^3 = 4\times2^0 a^3 + 2^{2\times0}a^3, \\
    R_{\,0}^{\,1} &= \frac{5}{4}, \\
    T(0) &= 0+1.
\end{aligned}
\label{eq:n0summary}
\end{equation*}

\subsection{\texorpdfstring{Case $n=1$ ($l = 8a$)}{Case n=1 (l=8a)}}
\label{sec:casen1}

In the case $n=1$ (see Fig.~\ref{fig:n1}), the parent object $T_{\,1}^{\,0}$ contains $2^{1+2}=8$ unit cubes
and has volume
\[
	V_{\,1}^{\,0} = 8a^{3}.
\]
After the first fragmentation, which produces four pieces of length $2^{1}a=2a$, the maximal-volume
tower $T_{\,1}^{\,1}$ has
\[
	V_{\,1}^{\,1} = 8a^{3} + 4a^{3} = 12a^{3},
	\qquad
	R_{\,1}^{\,1} = \frac{3}{2}.
\]
The quantity $4a^3$ corresponds to the volume of the cavity created at this stage.

Further division of each building block into two equal parts leads to the final stage, at which each
block is a cube. These cubes are assembled into a two-level tower $T_{\,1}^{\,2}$ with a smaller cavity of volume $2a^3$,
\[
	V_{\,1}^{\,2} = 8a^{3} + 2a^{3} = 10a^{3},
	\qquad
	R_{\,1}^{\,2} = \frac{5}{4}.
\]
The resulting sequence of relative volumes is therefore
\[
	R_{\,1}^{\,1} = \frac{3}{2} \;\longrightarrow\; R_{\,1}^{\,2} = \frac{5}{4}.
\]

\begin{figure}[t]
\begin{tikzpicture}[scale=0.4]
\draw[thick] (0,0-1) node[below]{$a$} node[above=10pt, left]{$a$}--(0,1-1)--(0.5,0.4-1)--(0.5,-0.6-1)--cycle;
\foreach \x in {1,2,3,4,5,6,7} \draw (\x, 1-1)--(\x+0.5, 0.4-1)--(\x+0.5,-0.6-1);
\draw[thick] (8,1-1)--(8.5,0.4-1)--(8.5,-0.6-1);
\draw[thick] (0,1-1)--(4,1-1) node[above] {$l=8a$}--(8,1-1);
\draw[thick] (0.5,0.4-1)--(8.5,0.4-1);
\draw[thick] (0.5,-0.6-1)--(4.5,-0.6-1) node[below=7pt]{\(T_{\,1}^{\,0}\)}--(8.5,-0.6-1);

\begin{scope}[shift={(-1,0)}]
	\draw (14,-0.2-1)--(14.5,-0.8-1)--(14.5,-1.8-1)node[below]{\(T_{\,1}^{\,1}\)};
	\draw[thick] (13,-0.2-1)--(15,-0.2-1);
	\draw[thick] (13.5,-0.8-1)--(15.5,-0.8-1);
	\draw[thick] (13,-0.2-1)--(13.5,-0.8-1); \draw[thick] (15,-0.2-1)--(15.5,-0.8-1);
	\draw[thick] (13,-1.2-1)--(13.5,-1.8-1)--(15.5,-1.8-1)--(15.5,-0.8-1);
	\draw[thick] (13.5,-1.8-1)--(13.5,-0.8-1);

	\draw (12.5,0.4-1)--(11.5,0.4-1)--(11.5,-0.6-1);
	\draw[thick] (12,1-1)--(13,-0.2-1); \draw[thick] (11,1-1)--(12,-0.2-1);% Top
	\draw[thick] (11,1-1)--(12,1-1); \draw[thick] (12,-0.2-1)--(13,-0.2-1);
	\draw[thick] (11,0-1)--(12,-1.2-1);
	\draw[thick] (11,1-1)--(11,0-1);  \draw[thick] (12,-0.2-1)--(12,-1.2-1);
	\draw[thick] (12,-1.2-1)--(13,-1.2-1)--(13,-0.2-1);

	\draw (12.5,1.6-1)--(13,1-1)--(13,0-1);
	\draw[thick] (12,1-1)--(14,1-1); \draw[thick] (11.5,1.6-1)--(13.5,1.6-1);
	\draw[thick] (11.5,1.6-1)--(12,1-1); \draw[thick] (13.5,1.6-1)--(14,1-1);

	\path [name path=vline] (11.5,1.6-1)--(11.5,0.6-1);
	\path [name path=hline] (11,1-1)--(12,1-1);
	\draw [name intersections={of=hline and vline, by=x}] [thick] (11.5,1.6-1)--(x);

	\path [name path=hline] (14,0-1)--(12,0-1);
	\path [name path=slope] (12,1-1)--(13,-0.2-1);
	\draw [name intersections={of=hline and slope, by=x}] [thick] (14,0-1)--(x);

	\draw (15.5,0.4-1)--(14.5,0.4-1)--(14.5,-0.2-1);

	\path [name path=hline] (13,-0.2-1)--(15,-0.2-1);
	\path [name path=slope] (14,0-1)--(15,-0.8-1);
	\draw [name intersections={of=hline and slope, by=x}][thick] (14,0-1)--(x);

	\draw[thick] (14,1-1)--(15,-0.2-1); \draw[thick] (15,1-1)--(16,-0.2-1);
	\draw[thick] (14,1-1)--(15,1-1); \draw[thick] (15,-0.2-1)--(16,-0.2-1);
	\draw[thick] (16,-0.2-1)--(16,-1.2-1)--(15.5,-1.2-1);
	\draw[thick] (14,1-1)--(14,0-1);
\end{scope}

\begin{scope}[shift={(-2,0)}]
	\draw[thick] (12.5+7,0.4)--(13.5+7,0.4)--(14+7,-0.2)--(13+7,-0.2)--cycle;
	\draw[thick] (14+7,-0.2)--(14+7,-1.2)--(13+7,-1.2)--(12.5+7,-0.6)--(12.5+7,0.4);
	\draw[thick] (13+7,-1.2)--(13+7,-0.2);

	\draw[thick] (11+7,1)--(12+7,1)--(12.5+7,0.4)--(11.5+7,0.4)--cycle; % Top
	\draw[thick] (11+7,1)--(11+7,0)--(11.5+7,-0.6)--(12.5+7,-0.6)--(12.5+7,0.4); % Left Front
	\draw[thick] (11.5+7,-0.6)--(11.5+7,0.4);

	\draw[thick] (12+7,1)--(11.5+7,1.6)--(12.5+7,1.6)--(13+7,1)--cycle;  \draw[thick] (11.5+7,1.6)--(11.5+7,1);

	\draw[thick] (13+7,1)--(14+7,1)--(14.5+7, 0.4)--(13.5+7,0.4)--cycle;
	\draw[thick] (14.5+7,0.4)--(14.5+7,-0.6)--(14+7,-0.6);
	\draw[thick] (13+7,1)--(13+7,0.4);

	\draw[thick] (18,0)--(18,-1)--(18.5,-1.6)--(19.5,-1.6)--(20,-2.2)node[below=6pt]{\(T_{\,1}^{\,2}\)}--(21,-2.2);
	\draw[thick] (18.5,-0.6)--(18.5,-1.6); \draw[thick] (19.5,-0.6)--(19.5,-1.6);
	\draw[thick] (20,-2.2)--(20,-1.2); \draw[thick] (21,-2.2)--(21,-1.2);
	\draw[thick] (14.5+7,-0.6)--(14.5+7,-1.6)--(21,-1.6);
\end{scope}

\end{tikzpicture}
\caption{Fragmentation stages for case $n=1 \, (l=8a)$. The parent object $T_{\,1}^{\,0}$ and the two resulting towers
$T_{\,1}^{\,1}$ and $T_{\,1}^{\,2}$ are shown. The towers form a conjugate pair, with the second tower enclosing a smaller volume.}
\label{fig:n1}
\end{figure}

In the following, the findings are summarised in a form convenient for deriving general expressions for the volumes,
relative volumes, and the total number of fragmentation stages.

\begin{align*}
	V_{\,1}^{\,0} &= 8a^3 = 4\times 2^1 a^3,\\
	V_{\,1}^{\,1} &= 8a^3 + 4a^3 = 4\times 2^1 a^3 + 2^{2\times1}a^3,\\
	V_{\,1}^{\,2} &= 8a^3 + 2a^3 = 4\times 2^1 a^3 + 2^{2\times1-1}a^3,\\
	T(1) &= 1+1.
\end{align*}

The tower $T_{\,1}^{\,1}$ is built from blocks of length $2a$, whereas the tower $T_{\,1}^{\,2}$ is constructed
from unit blocks arranged such that the resulting structure appears, at a coarse-grained geometric level,
as if it were assembled from the same building blocks as $T_{\,1}^{\,1}$, differing only in their global arrangement.
In this sense, one configuration may be transformed into the other by rearrangement of the building blocks,
despite differences in their internal composition.

Situations of this type recur in all subsequent cases. It is therefore useful to introduce a specific terminology:
a pair of towers that either \emph{appear} to be, or are \emph{actually} constructed from the same building blocks,
but differ in their overall geometry and volume, are referred to as \textbf{conjugate towers}.

Within the purely geometric framework of the present model, if no energetic preference is associated with either
configuration, both may occur with comparable probability despite their different volumes. In this sense,
the two configurations may be viewed as distinct `geometric phases', one denser than the other.

\subsection{\texorpdfstring{Case $n=2$ ($l = 16a$)}{Case n=2 (l=16a)}}
\label{sec:casen2}

In the present case (see Fig.~\ref{fig:n2}), the parent object $T_{\,2}^{\,0}$ consists of $16$ unit cubes
arranged in a single row.

The first tower $T_{\,2}^{\,1}$ is obtained by dividing the parent object into four identical parts,
each of length $4a$, and arranging the resulting building blocks as shown in Fig.~\ref{fig:n2}. In this
configuration, a cavity is formed at the centre of the tower. The total volume of the tower is
\[
    V_{\,2}^{\,1} = 32a^3,
\]
where the volume of the cavity is $16a^3$.

By subsequently dividing each building block into two equal parts at each stage, two additional towers,
$T_{\,2}^{\,2}$ and $T_{\,2}^{\,3}$, are obtained. Their corresponding volumes are
\[
    V_{\,2}^{\,2} = 24a^3,  \qquad  V_{\,2}^{\,3} = 20a^3.
\]
The resulting sequence of relative volumes is therefore
\[
    R_{\,2}^{\,1} = 2 \;\longrightarrow\; 
    R_{\,2}^{\,2} = \frac{3}{2} \;\longrightarrow\; 
    R_{\,2}^{\,3} = \frac{5}{4}.
\]

\begin{figure}[t]
\begin{tikzpicture}[scale=0.4]
%\draw [help lines] (0,-10) grid [xstep = 1, ystep = 1] (30,2);
\draw[thick] (0,0) node[below]{$a$} node[above=10pt, left]{$a$}--(0,1)--(0.5,0.4)--(0.5,-0.6)--cycle;
\foreach \x in {1,2,3,4,5,6,7,8,9,10,11,12,13,14,15} \draw (\x, 1)--(\x+0.5, 0.4)--(\x+0.5,-0.6);
\draw[thick] (16,1)--(16.5,0.4)--(16.5,-0.6);
\draw[thick] (0,1)--(8,1) node[above] {$l=16a$}--(16,1);
\draw[thick] (0.5,0.4)--(16.5,0.4);
\draw[thick] (0.5,-0.6)--(8.5,-0.6) node[below]{$T_{\,2}^{\,0}$}--(16.5,-0.6);

\begin{scope}[shift={(-20,-6.3)}]
	\foreach \x in {1,2,3} \draw (22+\x, -1.4)--(22+\x+0.5, -2)--(22+\x+0.5,-3);
	\draw[thick] (22,-1.4)--(26,-1.4);
	\draw[thick] (22.5,-2)--(26.5,-2);
	\draw[thick] (22,-1.4)--(22.5,-2);  \draw[thick] (26,-1.4)--(26.5,-2);
	\draw[thick] (22,-2.4)--(22.5,-3)--(24.5,-3) node[below]{$T_{\,2}^{\,1}$}--(26.5,-3)--(26.5,-2);

	\foreach \x in {1,2,3} \draw (20+\x*0.5, 1-\x*0.6)--(19+\x*0.5, 1-\x*0.6)--(19+\x*0.5,-\x*0.6);
	\draw[thick] (20,1)--(22,-1.4); \draw[thick] (19,1)--(21,-1.4);% Top
	\draw[thick] (19,1)--(20,1); \draw[thick] (21,-1.4)--(22,-1.4);
	\draw[thick] (19,0)--(21,-2.4);
	\draw[thick] (19,1)--(19,0);  \draw[thick] (21,-1.4)--(21,-2.4);
	\draw[thick] (21,-2.4)--(22,-2.4)--(22,-1.4);
	\draw[thick] (22.5,-2)--(22.5,-3);

	\foreach \x in {1,2,3} \draw (19.5+\x, 1.6)--(19.5+\x+0.5, 1)--(19.5+\x+0.5,0);
	\draw[thick] (20,1)--(24,1); \draw[thick] (19.5,1.6)--(23.5,1.6);
	\draw[thick] (20,1)--(19.5,1.6); \draw[thick] (24,1)--(23.5,1.6);
	\draw[thick] (19.5,1)--(19.5,1.6);

	\path [name path=horizontal line] (24,0)--(19,0);
	\path [name path=slope] (20,1)--(22,-1.4);
	\draw [name intersections={of=horizontal line and slope, by=x}] [thick] (24,0)--(x);

	\foreach \x in {1,2} \draw (25+\x*0.5, 1-\x*0.6)--(24+\x*0.5, 1-\x*0.6)--(24+\x*0.5,-\x*0.6);
	\draw (25+3*0.5, 1-3*0.6)--(24+3*0.5, 1-3*0.6);

	\path [name path=hline] (22,-1.4)--(26,-1.4);
	\path [name path=vline] (24+3*0.5, 1-3*0.6)--(24+3*0.5, -3*0.6);
	\draw [name intersections={of=hline and vline, by=x}] (24+3*0.5, 1-3*0.6)--(x);

	\draw[thick] (24,1)--(26,-1.4); \draw[thick] (25,1)--(27,-1.4);
	\draw[thick] (24,1)--(25,1); \draw[thick] (26,-1.4)--(27,-1.4);
	\draw[thick] (27,-1.4)--(27,-2.4)--(26.5,-2.4);
	\draw[thick] (24,1)--(24,0);

	\path [name path=horizontal line] (22,-1.4)--(26,-1.4);
	\path [name path=slope] (24,0)--(26,-2.4);
	\draw [name intersections={of=horizontal line and slope, by=x}] [thick] (24,0)--(x);
\end{scope}

\begin{scope}[shift={(5.4,-0.5)}]
	\draw (14-8,-0.2-6)--(14.5-8,-0.8-6)--(14.5-8,-1.8-6);
	\draw[thick] (13-8,-0.2-6)--(15-8,-0.2-6);
	\draw[thick] (13.5-8,-0.8-6)--(15.5-8,-0.8-6);
	\draw[thick] (13-8,-0.2-6)--(13.5-8,-0.8-6); \draw[thick] (15-8,-0.2-6)--(15.5-8,-0.8-6);
	\draw[thick] (13-8,-1.2-6)--(13.5-8,-1.8-6)--(15.5-8,-1.8-6)--(15.5-8,-0.8-6);
	\draw[thick] (13.5-8,-1.8-6)--(13.5-8,-0.8-6);

	\draw (12.5-8,0.4-6)--(11.5-8,0.4-6)--(11.5-8,-0.6-6);
	\draw[thick] (12-8,1-6)--(13-8,-0.2-6); \draw[thick] (11-8,1-6)--(12-8,-0.2-6);% Top
	\draw[thick] (11-8,1-6)--(12-8,1-6); \draw[thick] (12-8,-0.2-6)--(13-8,-0.2-6);
	\draw[thick] (11-8,0-6)--(12-8,-1.2-6);
	\draw[thick] (11-8,1-6)--(11-8,0-6);  \draw[thick] (12-8,-0.2-6)--(12-8,-1.2-6);
	\draw[thick] (12-8,-1.2-6)--(13-8,-1.2-6)--(13-8,-0.2-6);

	\draw (12.5-8,1.6-6)--(13-8,1-6)--(13-8,0-6);
	\draw[thick] (12-8,1-6)--(14-8,1-6); \draw[thick] (11.5-8,1.6-6)--(13.5-8,1.6-6);
	\draw[thick] (11.5-8,1.6-6)--(12-8,1-6); \draw[thick] (13.5-8,1.6-6)--(14-8,1-6);

	\path [name path=vline] (11.5-8,1.6-6)--(11.5-8,0.6-6);
	\path [name path=hline] (11-8,1-6)--(12-8,1-6);
	\draw [name intersections={of=hline and vline, by=x}] [thick] (11.5-8,1.6-6)--(x);

	\path [name path=hline] (14-8,0-6)--(12-8,0-6);
	\path [name path=slope] (12-8,1-6)--(13-8,-0.2-6);
	\draw [name intersections={of=hline and slope, by=x}] [thick] (14-8,0-6)--(x);

	\draw (15.5-8,0.4-6)--(14.5-8,0.4-6)--(14.5-8,-0.2-6);

	\path [name path=hline] (13-8,-0.2-6)--(15-8,-0.2-6);
	\path [name path=slope] (14-8,0-6)--(15-8,-0.8-6);
	\draw [name intersections={of=hline and slope, by=x}][thick] (14-8,0-6)--(x);

	\draw[thick] (14-8,1-6)--(15-8,-0.2-6); \draw[thick] (15-8,1-6)--(16-8,-0.2-6);
	\draw[thick] (14-8,1-6)--(15-8,1-6); \draw[thick] (15-8,-0.2-6)--(16-8,-0.2-6);
	\draw[thick] (16-8,-0.2-6)--(16-8,-1.2-6)--(15.5-8,-1.2-6);
	\draw[thick] (14-8,1-6)--(14-8,0-6);
	\draw[thick] (16-8,-1.2-6)--(16-8,-2.2-6)--(15.5-8,-2.2-6);

	\draw[thick] (3,-6)--(3,-7)--(4,-8.2)--(5,-8.2)--(5.5,-8.8)--(7.5,-8.8)--(7.5,-7.8);
	\draw (3.5,-7.6)--(3.5,-6.6); \draw (6.5,-8.8)node[below=1pt]{$T_{\,2}^{\,2}$}--(6.5,-7.8);
	\draw[thick] (4,-8.2)--(4,-7.2);
	\draw[thick] (5,-8.2)--(5,-7.2); \draw[thick] (5.5,-8.8)--(5.5,-7.8);
	\draw[thick] (5,-6)--(5,-6.2); \draw[thick] (6,-6)--(6,-6.2);
\end{scope}

\begin{scope}[shift={(-3,1)}]
	\draw[thick] (12.5+7,0.4-6)--(13.5+7,0.4-6)--(14+7,-0.2-6)--(13+7,-0.2-6)--cycle;
	\draw[thick] (14+7,-0.2-6)--(14+7,-1.2-6)--(13+7,-1.2-6)--(12.5+7,-0.6-6)--(12.5+7,0.4-6);
	\draw[thick] (13+7,-1.2-6)--(13+7,-0.2-6);

	\draw[thick] (11+7,1-6)--(12+7,1-6)--(12.5+7,0.4-6)--(11.5+7,0.4-6)--cycle; % Top
	\draw[thick] (11+7,1-6)--(11+7,0-6)--(11.5+7,-0.6-6)--(12.5+7,-0.6-6)--(12.5+7,0.4-6); % Left Front
	\draw[thick] (11.5+7,-0.6-6)--(11.5+7,0.4-6);

	\draw[thick] (12+7,1-6)--(11.5+7,1.6-6)--(12.5+7,1.6-6)--(13+7,1-6)--cycle;  \draw[thick] (11.5+7,1.6-6)--(11.5+7,1-6);

	\draw[thick] (13+7,1-6)--(14+7,1-6)--(14.5+7, 0.4-6)--(13.5+7,0.4-6)--cycle;
	\draw[thick] (14.5+7,0.4-6)--(14.5+7,-0.6-6)--(14+7,-0.6-6);
	\draw[thick] (13+7,1-6)--(13+7,0.4-6);

	\draw[thick] (18,0-6)--(18,-1-6)--(18.5,-1.6-6)--(19.5,-1.6-6)--(20,-2.2-6)--(21,-2.2-6);
	\draw[thick] (18.5,-0.6-6)--(18.5,-1.6-6); \draw[thick] (19.5,-0.6-6)--(19.5,-1.6-6);
	\draw[thick] (20,-2.2-6)--(20,-1.2-6); \draw[thick] (21,-2.2-6)--(21,-1.2-6);
	\draw[thick] (14.5+7,-0.6-6)--(14.5+7,-1.6-6)--(21,-1.6-6);

	\draw[thick] (18,-1-6)--(18,-2-6)--(18.5,-2.6-6)--(19.5,-2.6-6)--(20,-3.2-6)--(21,-3.2-6);
	\draw[thick] (18.5,-1.6-6)--(18.5,-2.6-6); \draw[thick] (19.5,-1.6-6)--(19.5,-2.6-6);
	\draw[thick] (20,-3.2-6)--(20,-2.2-6); \draw[thick] (21,-3.2-6)--(21,-2.2-6);
	\draw[thick] (14.5+7,-1.6-6)--(14.5+7,-2.6-6)--(21,-2.6-6);

	\draw[thick] (18,-2-6)--(18,-3-6)--(18.5,-3.6-6)--(19.5,-3.6-6)--(20,-4.2-6) node[below=1pt]{$T_{\,2}^{\,3}$}--(21,-4.2-6);
	\draw[thick] (18.5,-2.6-6)--(18.5,-3.6-6); \draw[thick] (19.5,-2.6-6)--(19.5,-3.6-6);
	\draw[thick] (20,-4.2-6)--(20,-3.2-6); \draw[thick] (21,-4.2-6)--(21,-3.2-6);
	\draw[thick] (14.5+7,-2.6-6)--(14.5+7,-3.6-6)--(21,-3.6-6);
\end{scope}

\end{tikzpicture}
\caption{Fragmentation stages for case $n=2 \, (l=16a)$. The parent object $T_{\,2}^{\,0}$ and the three
resulting towers $T_{\,2}^{\,1}$, $T_{\,2}^{\,2}$ and $T_{\,2}^{\,3}$ are shown. The volumes decrease
monotonically with stage index. Towers $T_{\,2}^{\,1}$ and $T_{\,2}^{\,3}$ form a conjugate pair,
while the intermediate tower $T_{\,2}^{\,2}$ is self-conjugate.}
\label{fig:n2}
\end{figure}

Relative to the parent object, the volume of the towers first increases at the initial fragmentation stage
and then decreases monotonically at subsequent stages. This behaviour is governed by the cavity formed during
the first rearrangement: while a cavity is created at the first stage, its volume decreases as the
fragmentation proceeds.

A further parallel with the case $n=1$ (Sec.~\ref{sec:casen1}) may be drawn. The towers $T_{\,2}^{\,1}$ and
$T_{\,2}^{\,3}$ may be regarded as being built from identical building elements of length $4a$, arranged
in different geometric configurations. As before, such a pair is referred to as a pair of conjugate towers.
The intermediate tower $T_{\,2}^{\,2}$ does not possess a conjugate partner in the same sense.

To clarify this point, the conjugation procedure is examined more closely. The tower $T_{\,2}^{\,3}$
can be obtained from $T_{\,2}^{\,1}$ by rotating the horizontally arranged building blocks into a vertical orientation.
The same operation can be applied to the building blocks of the tower $T_{\,2}^{\,2}$: rotating the horizontally
arranged blocks into a vertical position produces a tower that is identical in appearance to the original one,
even though its internal construction differs. This behaviour is a consequence of the square shape of the tower's lateral faces.

The tower $T_{\,2}^{\,2}$ is therefore referred to as a \emph{self-conjugate} tower. As shown in
Proposition~\ref{prop:selfconjugate} below, whenever $n$ is even, the middle tower in the sequence is self-conjugate.

From this perspective, transitions between conjugate towers may be interpreted as transitions between distinct
`geometric phases' with different densities. In contrast, the transition associated with a self-conjugate tower
corresponds to a change between different symmetry realisations of the same geometric state. In this sense,
transitions between conjugate towers resemble first-order-like phase transitions, while transitions involving
self-conjugate towers may be viewed as second-order-like phase transitions.

In the case $n=3$, two pairs of towers are present, namely $(T_{\,3}^{\,1}, T_{\,3}^{\,4})$ and
$(T_{\,3}^{\,2}, T_{\,3}^{\,3})$. Within each pair, a first-order-like geometrical phase transition is possible
through a pure rearrangement of the building blocks. By contrast, for $n=4$ the tower pairs between which such
geometrical phase transitions may occur are $(T_{\,4}^{\,1}, T_{\,4}^{\,5})$ and $(T_{\,4}^{\,2}, T_{\,4}^{\,4})$,
while the tower $T_{\,4}^{\,3}$ is self-conjugate and is associated with a second-order-like phase transition.

These examples already reveal the general structure. For arbitrary $n$, the conjugate tower pairs are
\begin{equation}
(T_{\,n}^{\,1}, T_{\,n}^{\,n+1}), \, 
(T_{\,n}^{\,2}, T_{\,n}^{\,n}), \, 
(T_{\,n}^{\,3}, T_{\,n}^{\,n-1}), \, \dots
\label{eq:conjugatepairs}
\end{equation}

If $n$ is odd, there are $(n+1)/2$ such pairs, and every tower has a conjugate partner. If $n$ is even,
there are $n/2$ conjugate pairs, while a single `middle' tower, $T_{\,n}^{\,n/2+1}$, remains
self-conjugate. This odd--even distinction plays a central role in the classification of geometrical
phase transitions and underlies the general formulae derived in Sec.~\ref{sec:generalvolume}.

The following proposition establishes that the lateral faces of a self-conjugate tower are necessarily square,
as claimed above. This requires the width and height of a lateral face to be equal. Since the width of a tower
face is equal to the length of its building blocks, it is sufficient to demonstrate that this length
coincides with the height of the tower.

\begin{proposition}
\label{prop:selfconjugate}
The lateral faces of a self-conjugate tower are square.
\end{proposition}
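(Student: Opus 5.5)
Following the reduction stated just before the proposition, the plan is to show that in the self-conjugate tower $T_{\,n}^{\,n/2+1}$ (which, by the pairing \eqref{eq:conjugatepairs}, exists only for even $n$) the building-block length equals the tower height. The first step is to record the geometry of a general tower $T_{\,n}^{\,i}$ as read off from the cases $n=0,1,2$. At stage $i$ the parent prism of length $2^{n+2}a$ has been cut into $2^{i+1}$ pieces. The first cut gives four pieces of length $2^{n}a$, and each later halving of the longest axis halves the length again. Hence at stage $i$ each block has length
\[
\ell_i = 2^{\,n-i+1}a ,
\]
as long as $\ell_i \ge a$; this is the horizontal width of a lateral face.

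Next I would count levels. In the maximal-volume tower the four lateral walls are built from blocks laid with their long axis horizontal. The $2^{i+1}$ blocks are split equally among the four walls, so each wall is a stack of $2^{i-1}$ blocks of cross section $a\times a$. The tower height is therefore
\[
h_i = 2^{\,i-1}a .
\]
This agrees with the examples: $T_{\,1}^{\,1}$ has height $a$ and width $2a$, and $T_{\,2}^{\,2}$ has height $2a$ and width $2a$. The lateral face of $T_{\,n}^{\,i}$ is thus a rectangle $2^{n-i+1}a \times 2^{i-1}a$.

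This step is the main obstacle, because the paper describes the reassembly only through figures. I would justify it by noting that the square cross-section rule forces the four walls to be congruent. In addition, the height must be a multiple of $a$ built from equal stacks, so that the enclosed volume $\ell_i^{\,2} h_i$ is fixed once blocks are distributed evenly. As a consistency check, this volume should reproduce the values $V_{\,n}^{\,i}$ found above.

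Finally I would use the conjugation operation. Conjugation rotates the horizontally laid blocks into a vertical orientation, so it exchanges the face dimensions $(\ell_i, h_i)$. It sends $T_{\,n}^{\,i}$ to $T_{\,n}^{\,n+2-i}$, since $2^{n-(n+2-i)+1}=2^{i-1}$, which recovers the pairing \eqref{eq:conjugatepairs}. A tower is self-conjugate exactly when $i = n+2-i$, i.e.\ $i=n/2+1$. At that stage
\[
\ell_i = h_i = 2^{\,n/2}a ,
\]
so the lateral face is square. Conversely, any tower whose appearance is unchanged under this rotation must have $\ell_i=h_i$, because the rotation swaps width and height. This second, coordinate-free argument establishes the proposition directly and does not rely on the explicit formulae.
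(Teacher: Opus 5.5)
Your proposal is correct and follows the paper's proof: it places the self-conjugate tower at $i=n/2+1$, takes $l_i=2^{\,n-i+1}a$ and $h_i=2^{\,i-1}a$, and notes that both equal $2^{\,n/2}a$ at that stage. The differences are minor variations on the same proof. You justify $h_i$ by splitting the $2^{\,i+1}$ blocks among four walls, where the paper says the number of levels doubles. Your closing width--height swap argument is just the condition $l_i=h_p$ from the proof of Corollary~\ref{cor:conjugatepairs}, specialised to $p=i$.
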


\begin{proof}
The equality of the width and height of a lateral face is established as follows.

First, a self-conjugate tower occurs only when $n$ is even and appears at the fragmentation stage $i = n/2 + 1$.

Second, the length of the building blocks at stage $i$ follows from the fragmentation rule. The initial length
is divided by four at the first stage and is subsequently halved at each stage. The block length at stage $i$
is therefore
\begin{equation}
    l_i = \frac{2^{\,n+2}a}{2^{\,i+1}} = 2^{\,n-i+1}a\,.
    \label{eq:block_length}
\end{equation}
Substitution of $i = n/2 + 1$ into Eq.~\eqref{eq:block_length}
yields $l_{\,n/2+1} = 2^{\,n/2}a$.

Third, the height of the tower is determined by the number of 
levels multiplied by $a$. A single level is formed at the first 
stage, and the number of levels doubles at each subsequent stage. 
The number of levels at stage $i$ is thus $2^{\,i-1}$, therefore 
the height of the tower $h_i$ at stage $i$ is
\begin{equation}
    h_i = 2^{\,i-1}a\,.
    \label{eq:tower_height}
\end{equation}
Substitution of $i = n/2 + 1$ into Eq.~\eqref{eq:tower_height} gives
$h_{\,n/2+1} = 2^{\,n/2}a$.

Hence, the width and height of the lateral faces are equal, and 
the lateral faces of a self-conjugate tower are square.
\end{proof}

\begin{corollary}
\label{cor:numtowers}
The number of towers $T(n)$ created from a parent object 
with dimensions $2^{\,n+2}a \times a \times a$ is given by
\begin{equation}
    T(n) = n+1\,.
    \label{eq:numOfTowers}
\end{equation}
\end{corollary}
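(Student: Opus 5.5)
The plan is to count the fragmentation stages directly from the fragmentation rule of Sec.~\ref{sec:fragmntrule}, using the block-length formula Eq.~\eqref{eq:block_length} derived in the proof of Proposition~\ref{prop:selfconjugate}. Each fragmentation stage $i\ge 1$ produces exactly one tower $T_{\,n}^{\,i}$. Hence $T(n)$ equals the number of admissible stages, i.e.\ the largest $i$ for which the rule can still be applied. The parent object $T_{\,n}^{\,0}$ is not counted, in agreement with the summaries $T(0)=0+1$ and $T(1)=1+1$ given earlier.

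First I will recall that the building blocks at stage $i$ have dimensions $l_i\times a\times a$ with $l_i = 2^{\,n-i+1}a$. This is Eq.~\eqref{eq:block_length}: the first stage divides $2^{\,n+2}a$ by four, and every later stage halves the longest axis.

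Next I will show that the longest axis is always the length axis as long as $l_i\ge 2a$. In that regime each halving is a cut transverse to the blocks, and it yields blocks of length $l_{i+1}=l_i/2\ge a$. The process terminates exactly when the blocks become unit cubes, that is, when $l_i=a$. Solving $2^{\,n-i+1}=1$ gives $i=n+1$. This matches the rule's own statement that stage $n+1$ is the one at which all pieces are unit cubes, and that no further stage is defined.

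The only subtle point is to rule out continuing past the cube stage. Further bisection of a unit cube would produce non-square cross-sections, $a\times a\times a/2$, which violates the model's restriction to blocks with square cross-section $a\times a$. The stages are therefore $i=1,\dots,n+1$, giving $T(n)=n+1$. As a consistency check, I will compare with the explicit cases $n=0,1,2$ and with the conjugate-pair count: $(n+1)/2$ pairs for odd $n$, and $n/2$ pairs plus one self-conjugate tower for even $n$. Both give $n+1$ towers.
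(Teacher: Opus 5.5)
Your proposal is correct and matches the paper's proof: the paper also sets the final-stage block length $l_f=a$ in Eq.~\eqref{eq:block_length}, solves $2^{\,n-f+1}=1$ to get $f=n+1$, and counts one tower per stage $i=1,\dots,f$. Your extra steps (transverse cuts while $l_i\ge 2a$, stopping at the cube stage) only spell out what the fragmentation rule already stipulates. The conjugate-pair check is not independent evidence, since Corollaries~\ref{cor:conjugatepairs} and~\ref{cor:parity} already assume $T(n)=n+1$.
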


\begin{proof}
At the final stage of fragmentation, denoted by $f$, the block 
length $l_f$ is equal to $a$. Equation~\eqref{eq:block_length} 
therefore takes the form
\[
    a = 2^{\,n-f+1}a\,.
\]
It then follows directly that the final stage index is $f = n+1$. 
Since towers are produced at stages $i = 1, 2, \dots, f$, the 
total number of towers produced during the fragmentation process 
is equal to $n+1$, which completes the proof.
\end{proof}

For each $n$, the sequence of towers $T_{\,n}^{\,i}$ exhibits 
a natural pairing: $T_{\,n}^{\,1}$ with $T_{\,n}^{\,n+1}$, 
$T_{\,n}^{\,2}$ with $T_{\,n}^{\,n}$, and so on. Towers within 
each pair are constructed from \emph{identical-looking} building 
blocks, but differ by a global rearrangement that leads to 
different enclosed cavity volumes. This pairing is made precise 
in the following corollary.

\begin{corollary}
\label{cor:conjugatepairs}
The conjugate tower pairs are given by
$(T_{\,n}^{\,i}, T_{\,n}^{\,n-i+2})$, where $i = 1,2,\dots,n+1$.
That is, a tower at fragmentation stage $i$ is paired with the tower
at fragmentation stage $n-i+2$.
\end{corollary}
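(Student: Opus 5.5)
Since Corollary~\ref{cor:conjugatepairs} only makes precise the pairing suggested by the cases $n=0,1,2$, the proof reduces to checking, for general $n$, that tower $T_{\,n}^{\,i}$ and tower $T_{\,n}^{\,n-i+2}$ are assembled from blocks that look identical at the coarse-grained level. Two towers can be interchanged by rotating their building blocks from horizontal to vertical orientation exactly when the block length of one equals the height of the other, and vice versa. So I would reduce conjugacy to a dimensional matching condition and use the two formulae established in the proof of Proposition~\ref{prop:selfconjugate}: the block length $l_i = 2^{\,n-i+1}a$ from Eq.~\eqref{eq:block_length} and the height $h_i = 2^{\,i-1}a$ from Eq.~\eqref{eq:tower_height}.

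The first step is to impose the conjugacy requirement $l_i = h_j$. This gives $2^{\,n-i+1} = 2^{\,j-1}$, so $j = n-i+2$. The second step is to check that the relation is symmetric: $l_j = 2^{\,n-j+1}a = 2^{\,i-1}a = h_i$. Hence the lateral face of $T_{\,n}^{\,i}$, of size $l_i\times h_i$, is the transpose of the lateral face of $T_{\,n}^{\,j}$, of size $l_j \times h_j$. Rotating each horizontal block of length $l_i$ into a vertical stack of height $h_j$ therefore maps one tower onto the other, which is the rearrangement described for $n=1,2$. The third step is to check the range. The map $i\mapsto n-i+2$ is an involution of $\{1,\dots,n+1\}$, since $T(n)=n+1$ by Corollary~\ref{cor:numtowers}. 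Its only fixed point is $i=n/2+1$, which exists only when $n$ is even; this recovers the self-conjugate tower of Proposition~\ref{prop:selfconjugate} and the pair counts stated after Eq.~\eqref{eq:conjugatepairs}. As a sanity check, $n=2$ gives the pairs $(1,3)$ and $(2,2)$, in agreement with Fig.~\ref{fig:n2}.

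The main obstacle is justifying that the matching $l_i=h_j$, $l_j=h_i$ really captures ``appear to be built from the same blocks,'' since the notion is coarse-grained and informal. The point needing care is that a vertical column of height $h_j$ in $T_{\,n}^{\,j}$ consists of $2^{\,j-1}$ stacked blocks, each of length $l_j$, and that this column coincides in shape with a single block of length $l_i$ in $T_{\,n}^{\,i}$. I would argue this from the common square cross-section $a\times a$ and the recursive construction, in which each stage halves the block length and doubles the number of levels. Under that construction the product of block length and number of levels, and hence the coarse-grained column, is preserved under the swap $i\leftrightarrow n-i+2$.
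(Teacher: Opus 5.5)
This is essentially the paper's proof: it likewise imposes $l_i = h_p$, i.e.\ $2^{\,n-i+1}a = 2^{\,p-1}a$, and reads off $p = n-i+2$. Your symmetry check $l_{n-i+2} = h_i$ and your involution and fixed-point remarks are correct additions that the paper leaves implicit.

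Your last paragraph contains an incorrect claim, though it is not needed for the proof. A side of $T_{\,n}^{\,n-i+2}$ is a stack of blocks with footprint $l_{n-i+2}\times a = 2^{\,i-1}a\times a$. It therefore matches the \emph{whole} side of $T_{\,n}^{\,i}$ (its $2^{\,i-1}$ blocks of length $l_i$) rotated within the lateral plane, not a single block of length $l_i$, except when $i=1$. Moreover, $l_i\,2^{\,i-1}=2^{\,n}a$ holds at every stage, so this invariant cannot single out the pairing. What does is the face transposition $l_i=h_j$, $l_j=h_i$.
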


\begin{proof}
If a tower at fragmentation stage $i$ is paired with a tower at
fragmentation stage $p$ (without loss of generality, 
$i \leqslant p$ may be assumed), then the length $l_i$ of the 
building blocks at stage $i$ is equal to the height $h_p$ of 
the tower at the paired stage $p$. Using 
Eqs.~\eqref{eq:block_length} and \eqref{eq:tower_height}, 
this condition can be written as
\[
    2^{\,n-i+1}a = 2^{\,p-1}a\,.
\]
From this relation, the paired fragmentation stage is obtained 
as $p = n-i+2$. Substituting back, the conjugate tower pairs 
are $(T_{\,n}^{\,i}, T_{\,n}^{\,n-i+2})$, which completes 
the proof.
\end{proof}

\begin{corollary}
\label{cor:parity}
The number of conjugate tower pairs generated from a parent
object indexed by $n$ depends on the parity of $n$.
For odd $n$, the $n+1$ towers form $(n+1)/2$ conjugate pairs.
For even $n$, the towers form $n/2$ conjugate pairs, and the 
remaining tower $T_{\,n}^{\,n/2+1}$ is self-conjugate.
\end{corollary}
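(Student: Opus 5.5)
The plan is to use Corollary~\ref{cor:conjugatepairs}, which pairs stage $i$ with stage $\sigma(i)=n-i+2$. The map $\sigma$ acts on the set of stages $\{1,2,\dots,n+1\}$, whose size is $T(n)=n+1$ by Corollary~\ref{cor:numtowers}. The first step is to check that $\sigma$ sends this set into itself. If $1\le i\le n+1$, then $1\le n-i+2\le n+1$. The second step is to check that $\sigma$ is an involution: $\sigma(\sigma(i))=n-(n-i+2)+2=i$. It follows that the towers split into orbits of $\sigma$. Each orbit is either a two-element set $\{i,\sigma(i)\}$ with $i\neq\sigma(i)$, which is a genuine conjugate pair, or a single fixed point $i=\sigma(i)$, which is a self-conjugate tower.

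The next step is to locate the fixed points. The equation $i=n-i+2$ gives $2i=n+2$, so $i=n/2+1$. This has an integer solution if and only if $n$ is even. When it does, the solution lies in $\{1,\dots,n+1\}$, because $1\le n/2+1\le n+1$ for all $n\ge0$. There is therefore at most one fixed point: none when $n$ is odd, and exactly the stage $n/2+1$ when $n$ is even. This fixed point agrees with the stage identified in Proposition~\ref{prop:selfconjugate}. One can also confirm it directly, since $l_{n/2+1}=h_{n/2+1}=2^{n/2}a$ by Eqs.~\eqref{eq:block_length} and \eqref{eq:tower_height}.

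Counting then finishes the argument. For odd $n$ there are no fixed points, so all $n+1$ towers (an even number) fall into two-element orbits, giving $(n+1)/2$ pairs. For even $n$ one tower is fixed, and the remaining $n$ towers form $n/2$ pairs, with $T_{\,n}^{\,n/2+1}$ self-conjugate. Equivalently, the pairs can be listed explicitly as $i=1,\dots,\lfloor (n+1)/2\rfloor$, each paired with $n-i+2>i$, and this index range has the stated size in each parity case.

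I expect the main obstacle to be an interpretive point rather than a computational one. Corollary~\ref{cor:conjugatepairs} lists $(T_{\,n}^{\,i},T_{\,n}^{\,n-i+2})$ for every $i$, so each pair appears twice in that list. The pairs must therefore be counted as unordered orbits of $\sigma$, not as ordered tuples. I would state this explicitly at the start, by restricting to $i\le n-i+2$, so that the count is unambiguous.
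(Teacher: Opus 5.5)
Your proposal is correct and follows the paper's route. The paper proves this by counting the pairs given in Corollary~\ref{cor:conjugatepairs}, and your involution $\sigma(i)=n-i+2$ on $\{1,\dots,n+1\}$, whose only possible fixed point is $i=n/2+1$ for even $n$, makes that count explicit; your point that pairs must be counted as unordered orbits of $\sigma$ is a useful clarification that the paper leaves implicit.
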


\begin{proof}
The result follows directly from 
Corollary~\ref{cor:conjugatepairs} by counting the pairs.
\end{proof}

\begin{remark}
The emergence of conjugate and self-conjugate towers is a 
purely geometric consequence of the 
fragmentation\textendash{}reassembly procedure.
The term ``geometric phase'' is used here in an analogical 
sense, to denote distinct reassembled configurations of 
identical building blocks that enclose different volumes.
No energetic or thermodynamic interpretation is implied.
\end{remark}

\subsection{General volume formula}
\label{sec:generalvolume}

Motivated by the explicit examples presented in the preceding 
subsections, a general expression for the maximal volume attained 
at each fragmentation stage is derived.

The volume of a tower is given by the sum of the material volume 
$2^{n+2}a^{3}$ of the parent object, which remains 
unchanged throughout the fragmentation process, and the volume 
of the enclosed space (cavity). The cavity volume is given by 
the product of the cross-sectional area of the cavity and the 
height of the tower, with the latter given by 
Eq.~\eqref{eq:tower_height}. The cross-sectional area of the 
cavity is equal to the square of the building-block length $l_i$, 
given by Eq.~\eqref{eq:block_length}. The enclosed cavity volume 
is therefore expressed as
\[
    l_i^{2} h_i = 2^{\,2n-(i-1)} a^{3}.
\]
The total tower volume then follows immediately.

\begin{proposition}
\label{prop:volume}
The volume $V_{\,n}^{\,i}$ of the tower $T_{\,n}^{\,i}$ 
reassembled at fragmentation stage $i$ from a parent 
object of dimensions $2^{\,n+2}a \times a \times a$ is given by
\begin{equation}
    V_{\,n}^{\,i} = 2^{\,n+2} a^{3} + 2^{\,2n-(i-1)} a^{3},
    \qquad i = 1,2,\dots,n+1.
    \label{eq:Vn_i}
\end{equation}
\end{proposition}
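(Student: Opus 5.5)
The proof decomposes the tower volume into two parts. One is the material volume of the parent object, which fragmentation does not change. The other is the cavity volume, which we obtain from the block length and tower height already derived in the proof of Proposition~\ref{prop:selfconjugate}. The plan is to describe the geometry of $T_{\,n}^{\,i}$ explicitly enough to justify that decomposition, then substitute Eqs.~\eqref{eq:block_length} and \eqref{eq:tower_height}.

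\emph{Step 1: the material volume is conserved.} Every stage only cuts existing pieces along planes perpendicular to their long axis and reassembles them. So the union of all grains always has volume $N_{\rm tot}a^3 = 2^{\,n+2}a^3$, by Eqs.~\eqref{eq:Ntot} and \eqref{eq:V0}.

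\emph{Step 2: describe the tower at stage $i$.} At stage $i$ there are $4\cdot 2^{\,i-1}=2^{\,i+1}$ blocks, each of size $l_i\times a\times a$ with $l_i=2^{\,n-i+1}a$. We arrange them in $2^{\,i-1}$ levels. Each level is a square ``pinwheel'' ring of four blocks, placed as in Figs.~\ref{fig:n0}--\ref{fig:n2}: each block's end abuts the side of the next. Such a ring has outer side $l_i+a$ and encloses a square hole of side $l_i-a$. We will then check two things. First, the block count matches: $4$ blocks per level times $2^{\,i-1}$ levels gives $2^{\,i+1}$. Second, the height is $h_i=2^{\,i-1}a$, which is Eq.~\eqref{eq:tower_height}.

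\emph{Step 3: compute the cavity.} This is the step I expect to be the main obstacle, because it is where the paper's bookkeeping must be matched. With the pinwheel geometry, the interior hole literally has area $(l_i-a)^2$. However, the claimed cavity cross-section is $l_i^2$. Equivalently, the total volume $(l_i+a)^2h_i$ minus the material $4l_iah_i$ equals $(l_i-a)^2h_i$, not $l_i^2h_i$. So the formula must rest on a different convention for the enclosed region.

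I would reconcile this by checking the small cases. For $n=0$, the stated $V=5a^3$ corresponds to the full $3a\times 3a$ cross, or equivalently to the material plus an $a\times a$ central cell, with $l_1=a$. This suggests the convention that the cavity cross-section is the $l_i\times l_i$ square spanned by the blocks' lengths. Under that convention, each of the four blocks contributes one full face of the square, and the enclosed area is taken as $l_i^2$.

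I would state this as the definition of the enclosed volume used in the model and verify it against the cases $n=0,1,2$ computed above. For example, $n=1$, $i=1$ gives $l_1=2a$, $h_1=a$, and cavity $4a^3$, in agreement with the text. With the convention fixed, the cavity volume is
\[
l_i^2h_i = 2^{\,2(n-i+1)}\,2^{\,i-1}a^3 = 2^{\,2n-(i-1)}a^3.
\]

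\emph{Step 4: combine.} Adding the conserved material volume to the cavity volume gives Eq.~\eqref{eq:Vn_i}. The range $i=1,\dots,n+1$ is exactly the range of stages given by Corollary~\ref{cor:numtowers}.
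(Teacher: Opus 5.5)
\textbf{There is a genuine gap: the tower geometry in Step 2 is wrong.} Your decomposition, conserved material $2^{\,n+2}a^3$ plus cavity $l_i^2h_i$, is the same one the paper uses. The problem is the justification of the cavity term.

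The tower is not a pinwheel. In the paper's construction (Figs.~\ref{fig:n0}--\ref{fig:n2}), each level is a \emph{cross}:
\begin{itemize}
\item the four blocks lie along the four sides of a central $l_i\times l_i$ square;
\item each block's full $l_i\times a$ lateral face forms one whole side of that square;
\item neighbouring blocks touch only along an edge;
\item the four $a\times a$ corner columns are left empty and are not counted.
\end{itemize}
In cross-section the counted region is therefore $(l_i+2a)^2-4a^2=l_i^2+4l_ia$. That is material $4l_ia$ plus a hole whose area is \emph{literally} $l_i^2$.

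The paper's numbers rule out the pinwheel directly:
\begin{itemize}
\item For $n=0$, four unit cubes in a pinwheel form a solid $2a\times 2a$ square of volume $4a^3$, not $5a^3$.
\item For $n=1$, $i=1$, the pinwheel gives $9a^3$ instead of $12a^3$.
\end{itemize}

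\textbf{Step 3 therefore fails as written.} You keep the pinwheel of Step 2 (outer side $l_i+a$, hole side $l_i-a$) and then declare, as a ``convention'', that its enclosed area is $l_i^2$. That is not a convention; it asserts the conclusion. It also contradicts your own geometry: material plus $l_i^2h_i$ would exceed the pinwheel's bounding box $(l_i+a)^2h_i$ whenever $2l_i>a$, which always holds. Checking $n=0,1,2$ only shows agreement with values the paper computed from the cross geometry; it does not justify the general step.

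\textbf{The fix is already in your proposal.} Your phrase ``each of the four blocks contributes one full face of the square'' describes exactly the cross arrangement. Make that the arrangement in Step 2 and drop the pinwheel. Step 3 then becomes a plain computation. Stack $2^{\,i-1}$ cross levels, giving height $h_i=2^{\,i-1}a$. The material is $4l_iah_i=2^{\,n+2}a^3$ and the cavity is $l_i^2h_i=2^{\,2n-(i-1)}a^3$. This is precisely the paper's argument.
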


The corresponding relative volume is obtained as
\begin{equation}
    R_{\,n}^{\,i} = \frac{V_{\,n}^{\,i}}{V_{\,n}^{\,0}}
    = \frac{2^{\,n+2}a^3 + 2^{\,2n-(i-1)}a^3}{2^{\,n+2}a^3}
    = 1 + 2^{\,n-i-1}.
    \label{eq:general_relative_volume}
\end{equation}

The variation of the relative volume as a function of the 
fragmentation stage $i$ for several values of $n$ is illustrated 
in Fig.~\ref{fig:rv}. As established in 
Corollary~\ref{cor:numtowers}, the final fragmentation occurs 
at stage $i=n+1$. The relative volume at the final stage is 
therefore independent of $n$ and is given by
\begin{equation}
    R_{\,n}^{\,n+1} = 1 + 2^{-2} = \frac{5}{4}.
    \label{eq:5/4}
\end{equation}

The value $5/4$ thus represents a geometric upper bound for the 
relative volume obtained after complete fragmentation within 
the present model.

The plots further indicate that the first fragmentation stage 
produces the largest increase in volume, after which the relative 
volume decreases monotonically with increasing stage index. This 
behaviour is governed by the evolution of the central cavity: 
the parent object encloses no void space, whereas the first 
fragmentation generates the largest possible cavity. At each 
subsequent stage, the cross-sectional area of the cavity is 
reduced by a factor of four, while the height of the tower 
increases only by a factor of two. Consequently, the enclosed 
cavity volume\,---\,and hence the cavity contribution to the relative 
volume\,---\,is reduced by a factor of two at each stage. The 
relative volume itself decreases monotonically but does not 
halve, owing to the additive contribution of the material volume.

\begin{figure}[t]
\centering
\begin{tikzpicture}[scale=1.2]
    % Draw axes with thicker lines
    \draw[line width=0.8pt, ->] (0,0) -- (5.5,0) node[right] {\large $i$};
    \draw[line width=0.8pt, ->] (0,0) -- (0,5.5) node[above] {\large $R_n^i$};
    
    % Draw grid
    \draw[gray!30, very thin] (0,0) grid (5,5);
    
    % X-axis tick marks and labels
    \foreach \x in {0,1,2,3,4,5} {
        \draw[line width=0.8pt] (\x,0.05) -- (\x,-0.05);
        \node[below] at (\x,-0.1) {\large \x};
    }
    
    % Y-axis tick marks and labels
    \foreach \y in {0,1,2,3,4,5} {
        \draw[line width=0.8pt] (0.05,\y) -- (-0.05,\y);
        \node[left] at (-0.1,\y) {\large \y};
    }
    
    % Plot n=4 (solid, thick)
    \draw[black, line width=1.0pt] 
        (0,1) -- (1,5) -- (2,3) node[sloped, pos=0.5, above] {\large $n=4$}  -- (3,2) -- (4,1.5) -- (5,1.25);
    
    % Plot n=3 (solid, thick)
    \draw[black, line width=1.0pt] 
        (0,1) -- (1,3) -- (2,2) node[sloped, pos=0.5, above] {\large $n=3$} -- (3,1.5) -- (4,1.25);
    
    % Plot n=2 (solid, thick)
    \draw[black, line width=1.0pt] 
        (0,1) -- (1,2) -- (2,1.5) node[sloped, pos=0.5, above] {\large $n=2$} -- (3,1.25);
    
    % Plot n=1 (solid, thick)
    \draw[black, line width=1.0pt] 
        (0,1) -- (1,1.5) -- (2,1.25);
    \node[black] at (2.0,1.0) {\large $n=1$};
    
    % Plot n=0 (solid, thick)
    \draw[black, line width=1.0pt] 
        (0,1) -- (1,1.25);
    \node[black] at (1.0,0.75) {\large $n=0$};
    
\end{tikzpicture}
\caption{Relative volume $R_{\,n}^{\,i}$ as a function of fragmentation 
stage $i$ for several values of $n$. Each curve corresponds to a fixed 
parent object length $l=2^{n+2}a$. The first fragmentation stage yields 
the maximal relative volume, followed by a monotonic decrease towards 
the limiting value $R=5/4$ at $i=n+1$.}
\label{fig:rv}
\end{figure}

\subsection{Remarks on paired configurations and ``phase'' interpretation}

The pairing structure is closely related to the symmetry properties of the tower configurations.

The relative-volume difference for a conjugate pair $(i,n+2-i)$ is given by
\begin{align}
    \Delta R_{\,n}^{\,i}
    & = R_{\,n}^{\,i} - R_{\,n}^{\,n+2-i} \nonumber \\
    & = \bigl(1+2^{\,n-i-1}\bigr) - \bigl(1+2^{\,n-(n+2-i)-1}\bigr)\nonumber\\
    & = 2^{\,n-i-1} - 2^{\,i-3}.
    \label{eq:DeltaR}
\end{align}
This difference is maximal for $i=1$ (corresponding to the pairing of the first and last towers)
and decreases as the index $i$ approaches the central value, $i = n/2+1$.

The relative-volume difference vanishes when
\begin{equation}
    i = \frac{n}{2} +1\,.
    \label{eq:n/2+1}
\end{equation}
Since $i$ must be a positive integer, this condition requires careful interpretation.
If $n$ is even, Eq.~\eqref{eq:n/2+1} yields an integer value, and the pair $(i,n+2-i)$ reduces to
$(n/2+1,n/2+1)$. In this case, two conjugate towers coincide and form a single self-conjugate tower.
If $n$ is odd, such a reduction does not occur, and a self-conjugate tower never exists.

Conjugate towers are characterized by a rearrangement of \emph{identical-looking} building blocks.
When these towers coincide in the even-$n$ case, the resulting self-conjugate tower inherits properties
of both configurations. It can therefore be inferred that the self-conjugate tower may be constructed
from the same building blocks in two distinct orientations, implying the presence of an internal 
symmetry, in agreement with the behaviour already observed in Sec.~\ref{sec:casen2}.

These paired configurations provide a simple geometric analogue of distinct phases. While they are
composed of \emph{identical-looking} building blocks, they differ by a global rearrangement that
modifies the enclosed cavity volume. Rearrangement-induced transitions between conjugate towers
thus correspond to transitions between geometrically distinct configurations with different effective
densities.

The geometric transitions observed in the present model admit a natural interpretation in the language
of phase transitions. The relative volume $R_{\,n}^{\,i}$ plays the role of a macroscopic order parameter
distinguishing tower configurations. Transitions between conjugate towers involve discontinuous changes in
$R_{\,n}^{\,i}$ and may therefore be regarded as first-order-like.

In contrast, when $n$ is even, the middle tower is self-conjugate: rearrangement of the building blocks
alters their internal orientation without changing the overall geometry or volume. This transition is
associated with a discrete rotational symmetry and corresponds to a change in internal symmetry
rather than in macroscopic structure. In this sense, the self-conjugate tower represents a geometric
analogue of a second-order phase transition.

In summary, the fragmented tower configurations are organized into conjugate pairs distinguished
by a discrete symmetry of relative volume, with a self-conjugate configuration arising only for even 
$n$. This structure provides a geometric analogue of phase behavior, in which conjugate towers represent
distinct macroscopic phases, while the self-conjugate tower corresponds to a symmetry-driven
transition without a change in density.

\section{Connection to granular systems}
\label{sec:granular_connection}

The connection between the present model and real granular systems
is established by employing Eq.~\eqref{eq:general_relative_volume}
to evaluate the relative volume accessible to grains in experimental
systems, while Eq.~\eqref{eq:DeltaR} is employed to identify the
geometric phase transitions predicted by the model. Both expressions
are formulated in terms of the discrete parameters $n$ and $i$. To
enable experimental comparison, these parameters must be re-expressed
in terms of measurable grain characteristics.

Throughout this section, it is assumed that grains do not interact with one another,
that frictional effects are negligible, and that individual grains may be regarded as
effectively weightless, such that no load is transmitted through the assembly. These assumptions
are adopted in order to isolate the purely geometric contributions to volume evolution
that are central to the present model.

\subsection{Remarks on experimental observation of the relative volume}
\label{sec:experimental_observation}

A granular sample composed of identical grains with square 
cross-section of side length $a$ is considered. It is further 
assumed that these grains may be regarded as having been produced 
by sequential slicing of a hypothetical elongated parent object, 
in accordance with the fragmentation rule introduced earlier.

By comparing the total mass of the granular sample with the mass 
of an individual grain, the total number of grains in the sample, 
denoted by $N_g$, is determined. Let $l_g$ denote the length of 
an individual grain. The length $l$ of the hypothetical parent 
object is then given by
\begin{equation}
    l = N_g \, l_g\,.
    \label{eq:initial_length}
\end{equation}
Within the model, the same length is expressed by 
Eq.~\eqref{eq:Init_length_model} in terms of the discrete 
parameter $n$. By equating the two expressions and solving 
for $n$, one obtains
\begin{equation}
    n = \log_{2}\left(\frac{N_g \, l_g}{a}\right) - 2\,.
    \label{eq:n}
\end{equation}
Here, the parameter $a$ is assigned the same meaning in both 
the model description and the experimental system.

The next step is to identify the fragmentation stage of the 
hypothetical parent object at which the size of a model grain 
coincides with that of the grains used in the experiment. At 
fragmentation stage $i$, the grain length $l_i$ (given by 
Eq.~\eqref{eq:block_length}) must match the experimental grain 
length $l_g$, so that
\[
    2^{\,n-i+1} a = l_g\,.
\]
Using Eq.~\eqref{eq:n}, this condition yields
\begin{equation}
    i = \log_{2} N_g - 1\,.
    \label{eq:i_fragmentation}
\end{equation}
As a consistency check, for a single grain ($N_g = 1$), 
Eq.~\eqref{eq:i_fragmentation} yields $i = 0$, corresponding 
to the absence of fragmentation, as expected.

Substituting Eqs.~\eqref{eq:n} and \eqref{eq:i_fragmentation} 
into Eq.~\eqref{eq:general_relative_volume} yields
\begin{equation}
    R = 1 + \frac{l_g}{4a}\,.
    \label{eq:R}
\end{equation}
This result demonstrates that the relative volume is determined 
exclusively by the geometric parameters of the grains and is 
independent of the details of the fragmentation history. All 
dependence on the hypothetical parent object is eliminated, 
rendering the indices $n$ and $i$ redundant; they are therefore 
omitted from Eq.~\eqref{eq:R}.

Although the relative volume $R$ is expressed in Eq.~\eqref{eq:R} as a single function of
the geometric ratio $l_g/a$, this quantity should be understood as a characteristic upper
bound rather than as a uniquely reproducible experimental outcome. Under nominally
identical preparation conditions, repeated realizations of the system are expected
to yield a distribution of measured relative volumes clustered below this bound.

For a granular assembly composed of grains with square cross section, the relative volume
may therefore be determined experimentally by measuring only the grain length $l_g$ and
the side length $a$ of the cross section. Upon successive fragmentation, these measurements
may be repeated until the terminal stage is reached, at which $l_g = a$. In this limit,
the relative volume, given by Eq.~\eqref{eq:R}, approaches the value $5/4$,
in agreement with Eq.~\eqref{eq:5/4}.

Exact quantitative agreement with the theoretical curve is not expected under experimental
conditions. Nevertheless, the measured values are expected to follow the qualitative trend
predicted by Eq.~\eqref{eq:R} and to remain bounded above by the maximal-volume curve
provided by the model. Assuming square cross sections, these considerations may be summarised
as follows.

\begin{theorem}[Liza's theorem]
\label{thm:aspect_ratio}
If a grain has length $l_g$ and a square cross section with 
side length $a$, the upper bound of the relative volume is 
determined solely by the ratio $l_g/a$ according to 
Eq.~\eqref{eq:R}.
\end{theorem}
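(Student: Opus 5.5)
The plan is to derive Eq.~\eqref{eq:R} directly from Proposition~\ref{prop:volume}, or equivalently from Eq.~\eqref{eq:general_relative_volume}, by eliminating the model indices $n$ and $i$ in favour of measurable grain data. The key observation is that $R_{\,n}^{\,i}=1+2^{\,n-i-1}$ depends on $n$ and $i$ only through the difference $n-i$. Eq.~\eqref{eq:block_length} also involves only $n-i$: $l_i = 2^{\,n-i+1}a$. Hence the proof does not need Eqs.~\eqref{eq:n} and \eqref{eq:i_fragmentation} separately. It only needs the grain-length matching condition $l_g = l_i$.

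The steps, in order, are as follows. First, I identify the experimental grain with a building block at some stage $i$ of some parent object $T_{\,n}^{\,0}$, so that $l_g = 2^{\,n-i+1}a$. This gives $2^{\,n-i-1} = l_g/(4a)$. Second, I substitute this into Eq.~\eqref{eq:general_relative_volume} to obtain $R = 1 + l_g/(4a)$. As a cross-check, I also verify that Eqs.~\eqref{eq:n} and \eqref{eq:i_fragmentation} give $n - i = \log_2(l_g/a) - 1$, which is consistent. Third, I argue that the value is an upper bound. The tower $T_{\,n}^{\,i}$ was constructed as the maximal-volume configuration within the model class: square cross section, with the cavity cross-section equal to $l_i^2$ and height $h_i$ from Eq.~\eqref{eq:tower_height}. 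Any other reassembly of the same grains in that class therefore encloses at most $V_{\,n}^{\,i}$. Dividing by the material volume $V_{\,n}^{\,0} = N_g l_g a^2$, which is invariant, bounds the relative volume by Eq.~\eqref{eq:R}.

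Finally, the statement says the bound is determined \emph{solely} by $l_g/a$. For this I note that the right-hand side contains neither $N_g$ nor $n$, $i$ individually. Two parent objects yielding the same grain shape but different grain numbers thus give the same bound. I also check the endpoint $l_g=a$, which recovers Eq.~\eqref{eq:5/4}.

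The main obstacle is the upper-bound claim rather than the algebra. The paper explicitly declines to prove a global maximal-volume theorem, so the bound can only be asserted relative to the square-cross-section tower construction. Within that class, I would justify maximality first via the parallelogram argument of Sec.~\ref{sec:fragmntrule}: for fixed perimeter, the square maximises the area. A second, technical point is that $l_g/a$ must be a power of two for the matching $l_g = l_i$ to be exact. I would handle this by proving the identity for dyadic ratios. I would then invoke the paper's remark that the restriction is immaterial for large $N_g$, treating Eq.~\eqref{eq:R} as the natural interpolation for general ratios.
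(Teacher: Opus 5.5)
Your proposal is correct and follows the paper's argument. The paper likewise obtains Eq.~\eqref{eq:R} by imposing the grain-length matching condition, routed through Eqs.~\eqref{eq:n} and \eqref{eq:i_fragmentation}, and substituting into Eq.~\eqref{eq:general_relative_volume}, with the upper-bound status resting on the maximal-volume tower construction rather than on a global maximality theorem, as you acknowledge. Your observation that only $n-i$ enters, so Eq.~\eqref{eq:n} and the grain count $N_g$ are not actually needed, is a harmless streamlining that makes the independence from $N_g$ immediate.
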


\begin{remark}[Liza's limit]
\label{rem:limiting_value}
The upper bound of the relative volume decreases monotonically 
as the grain length $l_g$ decreases. Its absolute minimum is 
attained when $l_g = a$, at which point the upper bound equals 
$5/4$, a universal value referred to as Liza’s limit or Liza’s number.
The upper bound is independent of the number of grains and always exceeds unity,
corresponding to the volume obtained by packing all grains adjacently without voids.
\end{remark}

Equation~\eqref{eq:R} may be tested against existing experimental data. In Ref.~\cite{villarruel2000},
experiments were conducted on monodisperse nylon 6/6 cylindrical rods with diameter
$1.8\,\mathrm{mm}$ and length $7.0\,\mathrm{mm}$. Approximately 7200 rods were poured into a vertical
glass tube of height $1\,\mathrm{m}$ and inner diameter $1.90\,\mathrm{cm}$. Although the present
model is formulated for grains with square cross section, a reasonable estimate is expected to be
obtained for cylindrical rods of comparable transverse dimensions. The rods are therefore
approximated as grains with square cross section of side length $a = 1.8\,\mathrm{mm}$ and
grain length $l_g = 7.0\,\mathrm{mm}$. Substitution into Eq.~\eqref{eq:R} then yields
\[
    R = 1 + \frac{l_g}{4a} \simeq 1.97\,.
\]

Experimentally, the quantity reported is the \emph{packing fraction} $\rho$ (also commonly
denoted $\phi$), defined as the fraction of the available volume occupied by the particles.
When the longest grain dimension is small compared to the characteristic dimensions of the
container\,---\,in the present case, the rod length relative to the tube diameter\,---\,the packing
fraction may be approximated as $\rho \simeq R^{-1}$. This estimate gives $\rho \simeq 0.51$,
which lies within the experimentally observed range $\rho = 0.49$--$0.55$
reported in Ref.~\cite{villarruel2000}.

\subsection{Comparison with experimental scaling laws}

Experimental measurements have shown that the packing fraction 
$\phi$ of randomly oriented elongated particles decreases with 
increasing aspect ratio according to an asymptotic scaling 
law~\cite{philipse1996,williams2003}
\begin{equation}
2\phi \times \alpha = \langle \gamma \rangle, 
\qquad \alpha \gg 1,
\label{eq:philipse_scaling}
\end{equation}
where $\alpha = L/D$ is the aspect ratio, $L$ is the rod length, 
$D$ is the rod diameter, and $\langle \gamma \rangle$ is the 
average number of contacts per particle. Experiments indicate 
$\langle \gamma \rangle \approx 10.8$ for randomly packed 
rods~\cite{williams2003}.

The present geometric model can be compared with this scaling 
in the limit of large aspect ratio. For a grain with length 
$l_g$ and square cross-section of side $a$, the aspect ratio 
is $\alpha = l_g/a$. The packing fraction is related to the 
relative volume by $\phi \simeq 1/R$. Substituting 
Eq.~\eqref{eq:R} yields
\begin{equation}
\phi \simeq \frac{1}{R} = \frac{1}{1 + l_g/(4a)} 
    = \frac{4}{4 + \alpha}.
\label{eq:phi_alpha}
\end{equation}

For large aspect ratios ($\alpha \gg 1$), this reduces to
\begin{equation}
\phi \approx \frac{4}{\alpha}, 
\qquad \text{or equivalently} \qquad 
2\phi \times \alpha \approx 8.
\label{eq:scaling_model}
\end{equation}

This result demonstrates that the present purely geometric 
model captures the correct $\phi \propto 1/\alpha$ scaling 
observed experimentally. The numerical coefficient differs 
from the experimental value ($8$ versus $\approx 10.8$), which 
is expected given that the model provides an upper bound on 
accessible volume and assumes maximal-volume configurations 
rather than typical random packings. The agreement in scaling behavior confirms that volume evolution 
under fragmentation, as described by the present model, is 
consistent with the experimental observation that packing 
densities are size-invariant and have a purely geometric 
origin~\cite{williams2003}.

Figure~\ref{fig:experimental_comparison} compares the model 
predictions with experimental data from Freeman et al.~\cite{freeman2019} 
for aspect ratios ranging from $\alpha = 4$ to $\alpha = 32$. 
The experimental packing fractions lie consistently above the 
theoretical curve, as expected for a model that provides an upper 
bound on relative volume (or equivalently, a lower bound on packing 
fraction). The agreement in scaling behavior ($\phi \propto 1/\alpha$) 
validates the geometric approach.

\begin{figure}[t]
\centering
\begin{tikzpicture}[scale=0.8, font=\large]
    % Axes
    \draw[line width=0.8pt, ->] (0,0) -- (8.5,0) node[right] {$\alpha$};
%    \draw[line width=0.8pt, ->] (0,0) -- (5.5,0) node[right] {\large $i$};
    \draw[line width=0.8pt, ->] (0,0) -- (0,6.4) node[above] {$\phi$};
    
    % Axis labels
    \foreach \x in {0,5,10,15,20,25,30}
        \draw (\x/4,0.05) -- (\x/4,-0.05) node[below] {\x};
    \foreach \y in {0,0.1,0.2,0.3,0.4,0.5,0.6}
        \draw (0.05,\y*10) -- (-0.05,\y*10) node[left] {\y};
    
    % Model prediction: phi = 4/(4+alpha)
    \draw[black, thick, domain=2.6:32, samples=100] 
        plot (\x/4, {40/(4+\x)});
        
    % Asymptotic scaling law: phi = 5.4/alpha
    \draw[black, thick, dashed, domain=9.0:32, samples=100] 
        plot (\x/4, {54/\x});
    
    % Experimental data points with error bars (downward only)
    \filldraw[black] (1,5.68) circle (3pt);
    \draw[black] (1,5.06) -- (1,5.68);
    
    \filldraw[black] (2,4.55) circle (3pt);
    \draw[black] (2,4.04) -- (2,4.55);
    
    \filldraw[black] (2.5,4.67) circle (3pt);
    \draw[black] (2.5,4.01) -- (2.5,4.67);
    
    \filldraw[black] (4,3.17) circle (3pt);
    \draw[black] (4,2.42) -- (4,3.17);
    
    \filldraw[black] (5,2.05) circle (3pt);
    \draw[black] (5,1.48) -- (5,2.05);
    
    \filldraw[black] (6,1.53) circle (3pt);
    \draw[black] (6,1.09) -- (6,1.53);
    
    \filldraw[black] (8,1.32) circle (3pt);
\end{tikzpicture}
\caption{Experimental packing fraction $\phi_\infty$ as a function of 
aspect ratio $\alpha$ (circles with downward error bars) from Freeman et 
al.~\cite{freeman2019} compared with the present model 
(Eq.~\eqref{eq:phi_alpha}, solid line) and the asymptotic 
scaling law (Eq.~\eqref{eq:philipse_scaling}, dashed line). 
For $\alpha = 32$, the error bar is smaller than the symbol size and 
not visible. The model provides a lower bound on packing fraction, 
while the asymptotic scaling law provides an empirical upper bound 
based on contact statistics.}
\label{fig:experimental_comparison}
\end{figure}

The asymptotic scaling law (Eq.~\eqref{eq:philipse_scaling}), shown 
as a dashed line in Fig.~\ref{fig:experimental_comparison}, provides 
an empirical upper bound on packing fraction based on contact 
statistics in random packings. The experimental data lie between 
the two theoretical curves: above the geometric lower bound from 
the present model and below (or near) the empirical upper bound from 
the asymptotic scaling law. This positioning is physically reasonable, 
as real random packings achieve packing fractions intermediate between 
the maximal-volume geometric configurations considered here and the 
contact-limited configurations reflected in the asymptotic scaling. 
The convergence of both theoretical curves at large $\alpha$, 
with experimental data consistently confined between them, 
confirms that geometric packing constraints dominate throughout 
the range of aspect ratios studied, with the asymptotic regime 
exhibiting universal scaling behavior.

\subsection{Remarks on experimental observation of geometric phase transitions}

An important question remains: can the geometric, rearrangement-induced 
phase transitions described above be observed experimentally? In 
addressing this question, two physical constraints must be taken into 
account. First, once a grain has been sliced, cut, or crushed, the 
resulting fragments cannot spontaneously reattach to reconstitute the 
original solid grain. Second, the levels of a tower are not mechanically 
bonded to one another; the upper levels merely \emph{rest} on the lower 
ones under gravity.

A tower at fragmentation stage $i$ is constructed from grains of length
$l_i = 2^{\,n-i+1}a$ and has a height $h_i = 2^{\,i-1}a$ (see 
Eqs.~\eqref{eq:block_length} and~\eqref{eq:tower_height}). The conjugate 
tower at stage $n-i+2$ is instead constructed from grains of length 
$l_{n-i+2} = 2^{\,i-1}a$ and has a height $h_{n-i+2} = 2^{\,n-i+1}a$.

At first glance, the two towers appear to be assembled from 
\emph{identical grains}: in the former case the grains are arranged 
horizontally, whereas in the latter they are aligned vertically. 
Crucially, however, this apparent equivalence is misleading. The 
vertically oriented ``grain'' is not a single object, but rather a 
stack of smaller fragments, whereas the horizontally oriented grain 
is an intact solid piece.

For illustration, the case $n=2$, shown in Fig.~\ref{fig:n2}, may be 
considered. At fragmentation stage $i=1$, the grain length is 
$l_1 = 2^{\,2-1+1}a = 4a$, and the tower consists of a single level 
of height $h_1 = 2^{\,1-1}a = a$. At the conjugate stage $n-i+2 = 3$, 
the grain length is $l_3 = 2^{\,i-1}a = a$, and the tower height is
$h_3 = 2^{\,2-1+1}a = 4a$. Four unit blocks are stacked vertically, 
producing an object that \emph{appears} identical, at a coarse-grained 
geometric level, to a single grain of length $4a$ at stage $i=1$.

As discussed, a tower at fragmentation stage $i$ is paired 
with a tower at stage $n-i+2$. This pairing reflects the fact that 
the two configurations are composed of building blocks that are 
geometrically indistinguishable at the level of their external shape, 
despite differing in their internal structure. Without loss of 
generality, the case $i < n-i+2$ is considered. In this situation, 
the grains at stage $n-i+2$ are smaller than those at stage $i$. 
Although their vertical alignment reproduces the geometry of the 
larger grains, recombination into the intact grains of stage $i$ 
is not possible. Moreover, since the vertically stacked fragments 
are not bonded to one another, any attempt to rearrange them would 
cause the stack to collapse, preventing reconstruction of the tower 
configuration at stage $i$. By contrast, the grains at stage $i$ 
\emph{can} be rearranged to form the tower corresponding to stage 
$n-i+2$.

It is therefore concluded that the geometric phase transition is 
intrinsically directional and may occur only in the direction 
$i \to n-i+2$. However, if the intact grains at stage $i$ happen 
to be oriented vertically\,---\,matching the orientation of the 
stacked fragments at stage $n-i+2$\,---\,then rotation to horizontal 
alignment enables the reverse transition $n-i+2 \to i$ without 
requiring recombination of fragments.

With this constraint in mind, the following experimental situation may 
be considered. Suppose that $N_g$ grains, each of length $l_g$, are 
given. The fragmentation stage $i$ of the hypothetical parent object 
corresponding to this configuration is determined by 
Eq.~\eqref{eq:i_fragmentation}. The conjugate stage 
$i_c \equiv n-i+2$, expressed directly in terms of experimentally 
measurable parameters, is then given by
\begin{equation}
    i_c = \log_{2}\!\left(\frac{l_g}{a}\right) + 1.
    \label{eq:ic_fragmentation}
\end{equation}

If
\begin{equation}
    i < i_c,
    \label{eq:Condition4PhaseTransition}
\end{equation}
the occurrence of a geometric phase transition is permitted; otherwise, 
such a transition is prohibited.

By substituting the expressions for $n$ and $i$ from 
Eqs.~\eqref{eq:n} and \eqref{eq:i_fragmentation} into 
Eq.~\eqref{eq:DeltaR} for the difference between the relative volumes 
of conjugate configurations, the following expression is obtained:
\begin{equation}
    \Delta R = \frac{l_g}{4a} - \frac{1}{16}N_g.
    \label{eq:DeltaR_Obser}
\end{equation}
Here, the indices on $\Delta R$ are redundant and have therefore been 
omitted.

As already assumed above, the comparison is made under the convention 
$i < n-i+2$, adopted without loss of generality. Since the relative 
volume decreases monotonically with increasing fragmentation stage, 
the difference $\Delta R$ is therefore non-negative. 
Equation~\eqref{eq:DeltaR_Obser} then yields the criterion
\begin{equation}
    \frac{N_g a}{4\,l_g} \leqslant 1.
    \label{eq:Criteria4PhaseTransition}
\end{equation}
Equality corresponds to the self-conjugate configuration, in which 
the transition involves a change in internal geometric symmetry 
without any change in the enclosed volume.

The phase-transition criterion~\eqref{eq:Criteria4PhaseTransition} 
provides insight into the interpretation of Eq.~\eqref{eq:R}. 
As discussed above, geometric phase transitions are directional: 
they can occur only from intact grains to stacked configurations 
($i \to n-i+2$ with $i < (n+1)/2$), not in reverse, since stacked 
fragments would collapse during rearrangement. When the grain-number 
criterion is not satisfied\,---\,that is, when $i > (n+1)/2$\,---\,only 
a single accessible configuration exists, and repeated experimental 
measurements of the relative volume $R$ are expected to form a single 
cluster centred around the characteristic value prescribed by 
Eq.~\eqref{eq:R}. If the criterion is satisfied, meaning 
$i < (n+1)/2$, two conjugate geometric configurations become 
accessible. In this regime, repeated preparations may populate 
either configuration, resulting in two distinct clusters in the 
measured values of $R$, separated by $\Delta R$ 
(Eq.~\eqref{eq:DeltaR_Obser}) and corresponding to the two 
conjugate geometric phases.

If condition~\eqref{eq:Criteria4PhaseTransition} is not satisfied, 
the number of grains $N_g$ may, in principle, be reduced (provided 
$N_g \gg 1$) until the criterion is met. Combining these requirements 
yields
\[
    1 \ll N_g \leqslant \frac{4l_g}{a},
\]
indicating that the occurrence of a geometric phase transition requires 
grains with a sufficiently large aspect ratio.

The implications of this criterion may be illustrated by concrete 
examples. For the nylon rods discussed in 
Sec.~\ref{sec:experimental_observation} (with $l_g = 7.0\,\mathrm{mm}$ 
and $a = 1.8\,\mathrm{mm}$), the phase-transition condition implies 
$N_g < 16$. As a further example with larger aspect ratio, matchsticks 
with tips removed, characterised by a cross-sectional side length 
$a = 2\,\mathrm{mm}$ and length $l_g = 5\,\mathrm{cm}$, yield the 
condition $N_g < 100$.

In both cases, the number of grains required for the observation of 
the transition is relatively small. It is therefore concluded that 
geometric phase transitions of the type described here are restricted 
to mesoscopic systems, which may explain why they are not commonly 
observed in everyday granular materials.

Experimental support for the local nature of these transitions may 
be found in X-ray tomography studies. Freeman et al.~\cite{freeman2019} 
observed local alignment of small groups of rods (forming what may be 
termed \emph{domains}) in containers filled with grains of aspect ratio 
$\alpha = 8$. For these rods in containers of diameter $\approx 4.9\,l_g$, 
the criterion~\eqref{eq:Criteria4PhaseTransition} requires $N_g < 32$ 
for geometric phase transitions to occur. The observed local domains, 
while themselves randomly oriented within the container interior, are 
consistent with geometric rearrangements occurring in small grain 
clusters rather than uniformly throughout the sample. Notably, X-ray 
tomography reveals that domains near the container wall exhibit 
preferential alignment parallel to the boundary, suggesting that 
external constraints influence the orientation of geometric phases. 
In the container interior, where wall effects are negligible and 
gravity is the dominant external field, domain orientations appear 
randomly distributed. Similarly, Zakine et al.~\cite{zakine2025} 
found that pistachio shells\,---\,whose shapes deviate significantly 
from the idealised grains considered here\,---\,compact and interlock 
under mechanical agitation, suggesting that such transitions are 
facilitated by local rearrangements within small clusters.

This naturally motivates consideration of domains whose grain number 
and aspect ratio satisfy the phase-transition criterion derived above. 
Local rearrangements of this type may then cumulatively contribute to 
the macroscopic volume reduction observed during agitation. To test 
this hypothesis experimentally, the bulk packing density may be 
modelled as a superposition of contributions from different geometric 
configurations (or phases). It should be noted that while the 
criterion~\eqref{eq:Criteria4PhaseTransition} establishes an upper 
bound on domain size ($N_g \lesssim 4l_g/a$), it does not determine 
the actual distribution of grain numbers within domains. The statistics 
of domain sizes in bulk granular assemblies\,---\,including the most 
probable domain size and the distribution of cluster populations\,---\,remain 
open questions that would require a dynamical theory to address.

Although the present model is purely geometric and entirely static in 
nature, any experimental realisation necessarily involves real grains 
subject to friction, gravity, and contact constraints. Gentle 
mechanical agitation is therefore introduced not to endow the system 
with thermodynamic character, but simply to facilitate internal 
rearrangements and allow the system to explore its admissible 
geometric configurations. In this sense, agitation serves as a 
practical means of reducing kinetic trapping and history dependence, 
rather than as a source of thermalisation or energetic driving.

\paragraph{Experimental signatures.}
The geometric phase transition mechanism admits two distinct 
experimental regimes depending on the total grain number.

\textit{Small-system regime} ($N_g \lesssim 4l_g/a$): When the 
criterion~\eqref{eq:Criteria4PhaseTransition} is satisfied for 
the entire grain assembly, repeated preparations under controlled 
agitation should yield a bimodal distribution of bulk relative 
volumes $R$, with peaks separated by $\Delta R$ 
(Eq.~\eqref{eq:DeltaR_Obser}) corresponding to global conjugate 
configurations.

\textit{Large-system regime} ($N_g \gg 4l_g/a$): When the total 
grain number exceeds the criterion value, global phase transitions 
are prohibited, but local transitions within domains containing 
$N_g \lesssim 4l_g/a$ grains remain possible. In this regime, 
the bulk relative volume forms a single distribution, but X-ray 
tomography should reveal aligned grain clusters (domains) with 
characteristic maximum size $N_{\text{domain}} \sim 4l_g/a$. 
Since this constraint scales linearly with the aspect ratio 
$\alpha = l_g/a$, experiments conducted on grains of systematically 
varying aspect ratios should reveal that characteristic maximum 
domain sizes increase proportionally with $\alpha$. Observation 
of such scaling would provide direct experimental evidence for 
the geometric phase transition mechanism, while its absence would 
indicate alternative origins for the local alignment patterns.

\section{Summary}

A geometric model of granular matter composed of grains with square 
cross section has been developed by representing the system as 
originating from a hypothetical elongated parent object. The 
characteristic size of this object is fixed by the number and 
dimensions of the constituent grains.

Fragmentation is implemented through a well-defined slicing algorithm, 
applied iteratively and described in detail in the main text. At each 
stage of fragmentation, the resulting fragments are reassembled into 
configurations of maximal admissible volume, referred to as 
\emph{towers}. The volumes of these towers were analysed as functions 
of the fragmentation stage. In realistic granular systems composed of 
elongated grains, the constituents at a given stage of fragmentation 
are expected to arrange in a largely unconstrained manner, resulting 
in occupied volumes that lie between those of compact packing and the 
largest geometrically admissible configurations. The construction of 
maximal-volume towers therefore provides a purely geometric upper 
bound on occupied volume, or equivalently, a lower bound on packing 
fraction, and allows the characteristic scale and qualitative trends 
of volume changes induced by fragmentation to be estimated.

It has been shown that, at the initial stage of fragmentation, the 
occupied volume increases relative to that of the parent object. 
With further fragmentation, the volume then decreases monotonically, 
approaching at the final stage the limiting value $5/4$ of the original 
volume. It follows that, throughout the fragmentation process, a 
material composed of elongated grains typically occupies a volume 
larger than that of the original object, unless the grains are packed 
perfectly without voids.

The inverse of the relative volume, corresponding to the packing 
fraction, was shown to obey the asymptotic scaling law 
$\phi \propto 1/\alpha$ for large aspect ratios, in agreement with 
experimental observations. Comparison with experimental data from 
Freeman et al.~\cite{freeman2019} demonstrates that the model provides a geometric lower 
bound on packing fraction, with experimental values lying consistently 
above the theoretical predictions, as expected for realistic random 
packings.

Within the fragmentation sequence, pairs of configurations have been 
identified and termed \emph{conjugate towers}. These towers are 
composed of geometrically indistinguishable building blocks but enclose 
different volumes. When the number of fragmentation stages is even, a 
\emph{self-conjugate} tower appears at the centre of the sequence. The 
magnitude of the volume difference between conjugate towers depends on 
their separation along the fragmentation sequence: towers corresponding 
to nearby stages exhibit smaller differences, whereas the largest 
difference occurs between the first and final stages of fragmentation.

Transitions between conjugate towers may be interpreted as geometric 
phase transitions. A criterion for the observability of such transitions 
has been derived, showing that they are restricted to small systems 
satisfying $N_g \lesssim 4l_g/a$. In large bulk systems where this 
criterion is not globally satisfied, geometric phase transitions may 
nevertheless occur locally within small domains of aligned grains. 
X-ray tomography studies of rod packings reveal such local alignment 
patterns, with characteristic domain sizes of order 
$N_{\text{domain}} \lesssim 4l_g/a$ grains, consistent with the 
theoretical prediction. A directly testable experimental signature 
is proposed: maximum domain size should scale linearly with the 
aspect ratio $\alpha$, providing a quantitative test of the 
geometric phase transition mechanism.

In summary, an analytically tractable geometric model of fragmentation 
and reassembly has been presented, revealing features of volume evolution 
and phase-like behaviour in granular systems. Despite its simplicity, 
the model captures essential geometric aspects of granular matter, 
provides quantitative predictions in agreement with experimental scaling 
laws, and offers a framework for investigating packing geometry, porosity, 
metastable configurations, and local ordering phenomena in elongated 
grain assemblies. The work also identifies the need for a dynamical 
theory to address the statistics of domain size distributions and the 
mechanisms governing cluster formation.

\section*{Acknowledgements}

The author thanks T. B\"orzs\"onyi for valuable comments on the role 
of Van der Waals forces in fine powder packing and for helpful 
discussions during the preparation of this manuscript.

The author is grateful to A. Zaccone for his interest in this work 
and for suggesting the review article Ref.~\cite{zaccone2025}.

The author also thanks E. R. Weeks for kindly providing numerical 
data from the experiments reported in Ref.~\cite{freeman2019}.

This work is dedicated to the memory of the author's grandmother,
Liza Tsertsvadze. Many years ago, she remarked that a given mass
of corn kernels occupies a smaller volume than the flour obtained
from them. This simple observation provided the original motivation
for the present study and inspired the investigation of the geometric
origins underlying such volume changes.

The author acknowledges the use of AI language models (ChatGPT and 
Claude) to assist with manuscript preparation, including text editing, 
LaTeX formatting, and literature organization. All scientific content, 
theoretical development, and conclusions are solely the work of the author.

\bibliographystyle{apsrev4-2}
\bibliography{refs1}

\end{document}